\let\proof\relax
\let\endproof\relax
\begin{document}

\title{Approximate Submodular Functions and Performance Guarantees}

\author{\name Gaurav~Gupta \email ggaurav@usc.edu \\
       \addr Ming Hsieh Department of Electrical Engineering\\
       University of Southern California\\
       Los Angeles, CA 90089, USA
       \AND
       \name S\'ergio~Pequito \email goncas@rpi.edu \\
       \addr Department of Industrial and Systems Engineering\\
       Rensselaer Polytechnic Institute\\
       Troy, NY, USA
       \AND
       \name Paul~Bogdan \email pbogdan@usc.edu \\
       \addr Ming Hsieh Department of Electrical Engineering\\
       University of Southern California\\
       Los Angeles, CA 90089, USA
       }

\editor{}

\maketitle

\begin{abstract}
We consider the problem of maximizing non-negative non-decreasing set functions. Although most of the recent work focus on exploiting submodularity, it turns out that several objectives we encounter in practice are not submodular. Nonetheless, often we leverage the greedy algorithms used in submodular functions to determine a solution to the non-submodular functions. Hereafter, we propose to address the original problem by \emph{approximating} the non-submodular function and analyze the incurred error, as well as the performance trade-offs. To quantify the approximation error, we introduce a novel concept of $\delta$-approximation of a function, which we used to define the space of submodular functions that lie within an approximation error. We provide necessary conditions on the existence of such $\delta$-approximation functions, which might not be unique. Consequently, we characterize this subspace which we refer to as \emph{region of submodularity}. Furthermore, submodular functions are known to lead to different sub-optimality guarantees, so we generalize those dependencies upon a $\delta$-approximation into the notion of \emph{greedy curvature}. Finally, we used this latter notion to simplify some of the existing results and efficiently (i.e., linear complexity) determine tightened bounds on the sub-optimality guarantees using objective functions commonly used in practical setups and validate them using real data.
\end{abstract}

\begin{keywords}
  non-submodular functions, submodular approximation, region of submodularity, complex networks, curvature
\end{keywords}

\newtheorem{theorem}{Theorem}
\newtheorem{remark}{Remark}
\newtheorem{lemma}{Lemma}
\newtheorem{definition}{Definition}
\newtheorem{proposition}{Proposition}
\newtheorem{corollary}{Corollary}
\newtheorem{assumption}{Assumption}

\newenvironment{sproof}{%
  \renewcommand{\proofname}{Proof (sketch)}\proof}{\endproof}

\section{Introduction}

A multitude of problems in machine learning, control, game theory, economics can be modeled as discrete optimization \citep{das1,tropp,zhang,golovin,guillory, hoi}. Specifically, those captured by a maximization of set functions $f(S)$ subject to cardinality constraints \citep{xue,tzoumas2018TCNS,krause1,schnitzler,das}. The set functions chosen as objectives can have arbitrary structures that are problem-dependent. Nonetheless, in a quest to quantify sub-optimality in such discrete optimization problems, we often try to unveil structures (i.e., subclasses of functions with specific properties) that enable us to either develop efficient algorithms to determine the optimal solution, or to approximate the solutions when the problem is NP-hard yet with sub-optimality guarantees. Within the latter class, there was a surge for \emph{submodular} functions\footnote{A set function $f: 2^{\Omega}\rightarrow\mathbb{R}$ over a ground set $\Omega = \{1,2,\hdots,N\}$ is referred as submodular if and only if for all sets $S,T \subseteq \Omega$, $f(S\,\cup\,T) + f(S\,\cap\,T) \leq f(S) + f(T)$ \citep{bach}.}, whose approximate solution can be determined by a \emph{greedy algorithm} that determines the suboptimal solution by recursively adding the element which maximizes the objective. Also, this algorithm is known to achieve a constant performance bound \citep{feige, nemhauser, sviridenko, buchbinder}, which can be improved by using the concept of curvature in \citep{conforti, iyer}. 

Notwithstanding, some objectives of interest do no possess such properties \citep{bian}, e.g., Bayesian A-optimality, determinantal function, subset selection in $R^{2}$ objective \citep{das}, sparse approximation \citep{Das2,cevher}. Subsequently, it has been proposed to try to use the greedy algorithm, which empirically leads to a good performance. Is this a coincidence, or are there implicit features that justify some of the empirical performances obtained?

This is the quest we pursue in the present work. Specifically, we seek answers to the following questions:
\begin{itemize}
\item Is it possible to approximate an arbitrary set function
by a submodular one within a given error everywhere
(i.e., for all possible sets)?
\item Can we quantify the error incurred by such approximation?
\item For functions with the same approximation error, can we find some functions that are preferable, in the sense that if we perform suboptimal greedy algorithms, we are guaranteed to achieve a smaller sub-optimality gap?
\end{itemize}
In this work, we will see how the parameter \emph{submodularity ratio} can be used to interpret better the meaning of closeness to submodularity and how we can limit the error in the approximation. We will also see that \emph{curvature} of submodular function plays a vital role when it comes to the choice of approximation function offering same errors. The techniques developed will establish the relationship between non-submodular functions and submodular functions which may not be intuitive at first look but can be very useful.

\textbf{Main contributions}.

The present paper is motivated by the above questions, and in particular, the main contributions are as follows: 

\textbf{$\delta-$approximation}
We propose that any set function can be approximated as a $\delta-$approximate submodular function. This is used to capture the `divergence' between a given non-submodular function and a submodular function we use as an approximation.

\textbf{Region of submodularity} A novel notion of the region of submodularity is introduced which characterizes the class of submodular functions that are $\delta$-approximations of a given set function. Such notion can be used to better understand the meaning of closeness to be submodular.

\textbf{Performance bounds} We propose a novel definition for greedy curvature which unifies the existing one in \citep{conforti}. The proposed definition can be leveraged to simplify the proof of current theorems on curvature bounds \citep{conforti, bian}. Building upon it, we proved the performance bound for the greedy algorithm using $\delta-$approximate submodular functions. Remarkably, the computational complexity is only linear (in contrast with the ones available in the literature).

The performance bound will be lower than that of submodular function. We provide an intuition into the decrease in performance as a penalty paid to deviate from submodular functions in terms of approximation error.

\section{Preliminaries}

A discrete function $f:2^{N}\rightarrow \mathbb{R}^{+}$ defined over the ground set $\Omega = \{1,2,\hdots,N\}$ is monotone non-decreasing if for all $S \subseteq T\subseteq \Omega$, $f(S) \leq f(T)$. The marginal gain of an element $a \in \Omega$ with respect to a set $S\subseteq \Omega$ is defined as $f_{S}(a) = f(S\cup\{a\}) - f(S)$. 

Throughout this work, we are concerned about cardinality constrained maximization problem, i.e.,
\begin{equation}
\max\limits_{\vert S\vert \leq k,\, S\subseteq \Omega}f(S).
\label{eqn:maxProb}
\end{equation}
The objective function $f$ in the above discrete optimization problem will possibly be a non-submodular function. The optimal solution of this problem will be referred as $OPT$ such that $f(\Omega^{\ast}) = OPT$ and $\vert\Omega^{\ast}\vert = k$. Before stating the definition of approximate submodularity, we define the divergence between two discrete set functions.

\begin{definition}
The divergence $d$ between two non-decreasing set functions $f:2^{N}\rightarrow \mathbb{R}^{+}$ and $g:2^{N}\rightarrow \mathbb{R}^{+}$ defined over the same set $\Omega$ is denoted as
\begin{equation}
d(f, g) = \max\limits_{S\subseteq \Omega, \:a \in \Omega\setminus S}\left\vert \frac{f_{S}(a)}{g_{S}(a)} - 1\right\vert.
\end{equation}
\end{definition}

The $\delta-$approximate submodular function using the above definition of divergence is defined as follows.

\begin{definition}
A function $f$, possibly non-submodular, is referred as $\delta-$approximate submodular if there exists a non-decreasing submodular function $g:2^{N}\rightarrow \mathbb{R}^{+}$ such that $d(f,g) \leq \delta$.
\label{def:approxSubmod}
\end{definition}

The above definition can be re-written in its most useful form for the rest of the paper as
\begin{equation}
(1-\delta)\,g_{S}(a) \leq f_{S}(a) \leq (1+\delta)\,g_{S}(a),~\forall S\subseteq \Omega, a\in \Omega\setminus S.
\label{eqn:altDeltaApprox}
\end{equation}

Intuitively, any finite-valued set function can be looked as $\infty-$approximate submodular function, but we will restrict to the scenario where $\delta \leq 1$ in Definition\,\ref{def:approxSubmod}. It can be seen that $f$ is exactly submodular if and only if $\delta = 0$. For any non-negative set function $f$ the submodularity ratio was introduced as a parameter to measure closeness to submodularity \citep{das}. Furthermore, this notion can be generalized to the so-called generalized submodularity ratio as described in the next definition.

\begin{definition}[generalized submodularity ratio \citep{bian}] The submodularity ratio of a non-negative set function $f$ is given by
\begin{equation}
\gamma_{f} = \min\limits_{S,T \subseteq \Omega}\frac{\sum\nolimits_{t\in T\setminus S}f_{S}(t)}{f_{S}(T)}.
\label{eqn:genSubMod}
\end{equation}
\end{definition}

The $\gamma_{f}$ is such that $0\leq\gamma_{f}\leq 1$, and is equal to $1$ if and only if $f$ is submodular. Another useful parameter called curvature of submodular functions, which is a measure of deviation from modularity, was introduced in \citep{conforti} to write better performance bounds. The total curvature of a submodular function is defined as follows.
\begin{equation}
\alpha_{T} = 1 - \min\limits_{a\in\Omega}\frac{f_{\Omega\setminus a}(a)}{f(a)}.
\label{eqn:totalCurv}
\end{equation}

The $\alpha_{T}$ is such that $0\leq\alpha_{T}\leq 1$ and is equal to $0$ for modular/additive functions. Similar to the total curvature for submodular functions, we can write generalized curvature for any non-negative set function as

\begin{definition}[Generalized curvature \citep{bian}]
The curvature of a non-negative set function $f(.)$ can be written as
\begin{equation}
\alpha = 1 - \min\limits_{S,T \subseteq \Omega, s \in S\setminus T}\frac{f_{S\setminus\{s\}\,\cup\,T}(s)}{f_{S\setminus\{s\}}(s)}.
\end{equation}
\end{definition}
The generalized curvature is between $0$ and $1$ for the case of non-decreasing functions and for the particular case of $f$ being submodular it can be verified that $\alpha = \alpha_{T}$.

For the case of greedy algorithm, if $S_{G}$ denotes the ordered set solution of cardinality constrained maximization problem and $\Omega^{\ast}$ be the optimal solution such that $\vert S_{G}\vert = \vert\Omega^{\ast}\vert = k$ and $S_{G} = \{s_{1},s_{2},\hdots,s_{k}\}$, then we can define the greedy curvature in contrast to that in \citep{conforti} as follows.
\begin{definition}[Greedy curvature]
For a non-decreasing set function $f$ and $\Omega^{\ast}$ such that $\vert\Omega^{\ast}\vert = k$, the greedy curvature is defined as
\begin{eqnarray}
\alpha_{G} &=& 1 - \min\limits_{1\leq i\leq k}\left\{\min\limits_{a \in S_{G}\setminus (S_{G}^{i-1}\cup \Omega^{\ast})}\frac{f_{S_{G}^{i-1}\cup \Omega^{\ast}}(a)}{f_{S_{G}^{i-1}}(a)}, \vphantom{\min\limits_{\substack{a \in (S_{G}\cap\Omega^{\ast})\setminus S_{G}^{i-1} \\ i\leq j \leq k-1}}\frac{f_{S_{G}^{j-1}}(s_{j})}{f_{S_{G}^{i-1}}(a)}}  \min\limits_{\substack{a \in (S_{G}\cap\Omega^{\ast})\setminus S_{G}^{i-1} \\ i\leq j \leq k}}\frac{f_{S_{G}^{j-1}}(s_{j})}{f_{S_{G}^{i-1}}(a)} \right\},
\label{eqn:greedyCurv}
\end{eqnarray}
\label{def:greedyCurv}
\end{definition}
\noindent where $S_{G}^{i} = \{s_{1},s_{2},\hdots,s_{i}\}$ for $1\leq i\leq k$, $S_{G}^{0} = \phi$. 

The greedy curvature defined above is always less than or equal to greedy curvature defined in \citep{conforti}. The second term introduced in the above expression bounds the consecutive marginals of greedy selection. This kind of technique will play a key role in proving performance bounds of the greedy algorithm and will simplify the proofs to a great extent. It can be easily verified that $\alpha_{G}\leq \alpha$. The main results and interpretations are presented in the following section.

\section{Results}

We start by addressing the necessary conditions that a set function must satisfy to be $\delta-$approximate submodular. First, notice that if the function is submodular, then $\delta$ can be set to zero, and vice versa. On the other hand, if $\delta$ is different from zero, then we show that it cannot be arbitrary close. In other words, we obtain a necessary condition that establishes a \emph{`submodularity gap'}. Subsequently, we aim to characterize in further detail the properties of these functions, which lead us to introduce the notion of the region of submodularity (ROS). Within this region, there may exist different submodular functions, whose curvature differs and directly impact the optimality guarantees. Specifically, the lower the curvature is, the better the performance as shown in \citep{conforti}. Therefore, we aim to leverage these properties to show the performance of the greedy algorithm and obtain one of the main results of the paper, i.e., the improved constant optimality guarantees.

\subsection{Approximate submodularity}
\label{ssec:approxSubmod}

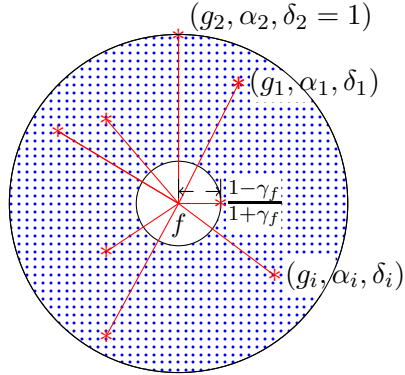
\begin{figure}
\centering
\begin{tikzpicture}[scale = 1.6]

\draw[pattern=dots,pattern color=blue] (0,0) circle (40pt);
\filldraw[fill=white] (0,0) circle (10pt);
\node[anchor=north] at (0,0) {\small$f$};

\draw[color=red] (0,0) -- (0,1.4);
\node[color=red] at (0,1.4) {$\ast$};
\draw[color=red] (0,0) -- (.35,0);
\node[color=red] at (0.35,0) {$\ast$};
\draw[color=red] (0,0) -- (0.5,1);
\node[color=red] at (0.5,1) {$\ast$};
\draw[color=red] (0,0) -- (-0.6,-0.4);
\node[color=red] at (-0.6,-0.4) {$\ast$};
\draw[color=red] (0,0) -- (0.8,-0.6);
\node[color=red] at (0.8,-0.6) {$\ast$};
\draw[color=red] (0,0) -- (-0.6,0.7);
\node[color=red] at (-0.6,0.7) {$\ast$};
\draw[color=red] (0,0) -- (-0.6,-1.1);
\node[color=red] at (-0.6,-1.1) {$\ast$};
\draw[color=red] (0,0) -- (-1,0.6);
\node[color=red] at (-1,0.6) {$\ast$};

\draw (0,0.125 - 0.08) -- (0,0.125 + 0.08);
\draw[->] (0.25/2,0.1) -- (0,0.1);

\draw[->] (0.45/2,0.1) -- (0.35,0.1) ;
\draw (0.35,0.125 - 0.08) -- (0.35,0.125 + 0.08);

\draw[color=red] (0,0) -- (-1,0.6);
\filldraw[fill=white,draw=white] (0.4,-.23) rectangle (0.85,0.23);
\node[anchor=west] at (0.3,0) {$\frac{1-\gamma_{f}}{1+\gamma_{f}}$};
\node[anchor=west] at (0,1.55) {$(g_{2},\alpha_{2},\delta_{2}=1)$};

\node[color=red] at (0.5,1) {$\ast$};
\filldraw[fill=white,draw=white] (0.6,0.85) rectangle (1.1,1.1);
\node[anchor=west] at (0.5,1) {$(g_{1},\alpha_{1},\delta_{1})$};

\filldraw[fill=white,draw=white] (0.9,-0.75) rectangle (1.3,-0.45);
\node[anchor=west] at (0.8,-0.6) {$(g_{i},\alpha_{i},\delta_{i})$};

\draw (0,0) circle (40pt);
\end{tikzpicture}
\caption{Region of submodularity (ROS) of $f$ as function of submodularity ratio $\gamma_{f}$. The minimum possible value of $\delta$ is $\frac{1-\gamma_{f}}{1+\gamma_{f}}$ and the maximum value is $1$. A tuple $(g_{i},\alpha_{i},\delta_{i})$ inside ROS represent the submodular function satisfying (\ref{eqn:altDeltaApprox}), its total curvature and corresponding $\delta_{i}$ respectively.}
\label{fig:ROS}
\end{figure}

\begin{lemma}
A non-submodular function $f$ with submodularity ratio $\gamma_{f}$ can be represented as $\delta-$approximate submodular to some submodular function $g$ only if $\delta \geq \frac{1-\gamma_{f}}{1+\gamma{f}}$.
\label{lemm:necessaryCond}
\end{lemma}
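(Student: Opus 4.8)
The plan is to prove the necessary condition by direct implication: assuming $f$ is $\delta$-approximate submodular, I will derive the stated lower bound on $\delta$. So let $g$ be a non-decreasing submodular witness satisfying the two-sided bound \eqref{eqn:altDeltaApprox}, namely $(1-\delta)\,g_S(a) \leq f_S(a) \leq (1+\delta)\,g_S(a)$ for all $S$ and $a \in \Omega\setminus S$. The idea is to transfer the submodularity of $g$ (which has submodularity ratio equal to $1$) through these marginal-gain inequalities into a lower bound on the ratio $\sum_{t\in T\setminus S} f_S(t) / f_S(T)$ that defines $\gamma_f$ in \eqref{eqn:genSubMod}, and then optimize over $S,T$.

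First I would fix an arbitrary pair $S,T \subseteq \Omega$ with $f_S(T) > 0$ (the only case relevant to the minimum in \eqref{eqn:genSubMod}) and treat the numerator and denominator of the submodularity ratio separately. The numerator is a sum of single-element marginals, so applying the lower bound in \eqref{eqn:altDeltaApprox} termwise gives $\sum_{t\in T\setminus S} f_S(t) \geq (1-\delta)\sum_{t\in T\setminus S} g_S(t)$. The denominator $f_S(T)$ is a set-marginal and cannot be bounded directly; here I would enumerate $T\setminus S = \{t_1,\dots,t_m\}$ and telescope, writing $f_S(T) = \sum_{j=1}^{m} f_{S\cup\{t_1,\dots,t_{j-1}\}}(t_j)$. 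Each term is now a single-element marginal, so the upper bound in \eqref{eqn:altDeltaApprox} applies term-by-term, and recombining the $g$ marginals yields $f_S(T) \leq (1+\delta)\,g_S(T)$.

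Combining the two estimates gives $\frac{\sum_{t\in T\setminus S} f_S(t)}{f_S(T)} \geq \frac{1-\delta}{1+\delta}\cdot\frac{\sum_{t\in T\setminus S} g_S(t)}{g_S(T)}$. Since $g$ is submodular, its own diminishing-returns property gives $g_S(T) \leq \sum_{t\in T\setminus S} g_S(t)$, so the last fraction is at least $1$ and the left-hand ratio is at least $\frac{1-\delta}{1+\delta}$ for every admissible $S,T$. Taking the minimum over all pairs yields $\gamma_f \geq \frac{1-\delta}{1+\delta}$, which I would finally rearrange algebraically (using $1+\delta>0$ and $1+\gamma_f>0$) into the claimed inequality $\delta \geq \frac{1-\gamma_f}{1+\gamma_f}$.

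I expect the telescoping step bounding $f_S(T)$ to be the main conceptual obstacle, since the divergence $d(f,g)$ constrains only single-element marginals whereas the submodularity ratio carries the set-marginal $f_S(T)$ in its denominator; decomposing $f_S(T)$ into a chain of single-element gains is precisely what lets the two-sided approximation reach this set-level quantity. A minor technical point to dispatch is the treatment of pairs with $f_S(T)=0$ or $g_S(T)=0$, which are excluded from the minimizing argument because $f$ and $g$ are non-negative and non-decreasing.
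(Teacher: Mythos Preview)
Your proof is correct and follows essentially the same idea as the paper's: use the two-sided marginal bound \eqref{eqn:altDeltaApprox} to relate the submodularity ratio of $f$ to that of $g$, and then invoke $\gamma_g=1$. The paper packages this as a contradiction (assuming $\delta$ too small forces $\gamma_g<1$), whereas you give the direct inequality $\gamma_f\ge\frac{1-\delta}{1+\delta}$; you are also more explicit than the paper about telescoping the set-marginal $f_S(T)$ into single-element marginals so that \eqref{eqn:altDeltaApprox} applies.
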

\begin{proof}
We will prove this by contradiction, let us assume that $\delta < \frac{1-\gamma_{f}}{1+\gamma_{f}}$ or $\frac{1+\delta}{1-\delta}<\frac{1}{\gamma_{f}}$. The submodularity ratio of any $g$ satisfying (\ref{eqn:altDeltaApprox}) can be written as
\begin{align*}
\gamma_{g} = \min\limits_{S,T \subseteq \Omega}\frac{\sum\nolimits_{t\in T\setminus S}g_{S}(t)}{g_{S}(T)} &\leq \frac{1+\delta}{1-\delta}\min\limits_{S,T \subseteq \Omega}\frac{\sum\nolimits_{t\in T\setminus S}f_{S}(t)}{f_{S}(T)} < \frac{1}{\gamma_{f}}\gamma_{f} = 1.
\end{align*}
where the first inequality is written using (\ref{eqn:altDeltaApprox}). Hence, $g$ cannot be submodular.
\end{proof}
The above result gives interesting insight into what do we mean by closeness of a function to being submodular using the submodularity ratio. The value of $\gamma_{f}$ limits the smallest possible value of $\delta$ and determines how close can be the function marginals to that of some submodular function. The \textbf{region of submodularity}, $\text{ROS}(f,\delta)$ is defined as.
\begin{equation}
\text{ROS}(f, \delta) = \left\{g~\vert~ \frac{1-\gamma_{f}}{1+\gamma_{f}}\leq d(f,g) \leq \delta\right\},
\end{equation}
\noindent which is the collection of $g$ for which a given function $f$ can be termed as $\delta-$approximate submodular. As shown in \figurename\,\ref{fig:ROS}, for a given $f$ the submodular functions in the shaded region can be used to describe it as an approximation. The closest value of $\delta$ is restricted by $\gamma_{f}$. It should be noted that multiple $g_{i}$ can be used to approximate the given $f$ as $\delta-$approximate submodular and hence, from the viewpoint of performance bound, we are interested in the $g$ with minimum total curvature for the given value of $\delta$. Let us denote $\alpha_{\delta}$ as the curvature of the selected $g$, i.e.,
\begin{equation}
\alpha_{\delta} = \min\limits_{g\in \text{ROS}(f,\delta)}\alpha_{T}(g).
\end{equation}

We will now state the result regarding performance of naive greedy selection for $\delta-$approximate submodular functions.

\subsection{Constant performance bound}
\label{ssec:constPerBnd}

It is worth to mention the result for the performance bound of a submodular function with total curvature $\alpha$.
\begin{theorem}[from \citep{conforti}]
For a non-negative non-decreasing submodular function $g$ with total curvature $\alpha$, if $OPT$ denotes the optimal value of $\max g(\mathcal{S})$ subject to $\vert\mathcal{S}\vert \leq k$, then the output of greedy algorithm, $\mathcal{S}_{G}$ satisfies $g(\mathcal{S}_{G}) \geq \frac{1}{\alpha}(1 - (1-\frac{\alpha}{k})^{k})OPT$.
\label{thm:curvature}
\end{theorem}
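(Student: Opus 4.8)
The plan is to run the classical greedy analysis, sharpened by the curvature property, and to reduce everything to a single first-order linear recursion on the greedy values $f(S_G^i)$. I would write $S_G=\{s_1,\dots,s_k\}$ for the greedy output, $S_G^i=\{s_1,\dots,s_i\}$ for the partial solutions, $\rho_{i+1}=f_{S_G^i}(s_{i+1})=f(S_G^{i+1})-f(S_G^i)$ for the greedy marginal gains, and $O=\Omega^{\ast}$ for an optimal solution with $|O|=k$; I will assume $f$ is normalized, $f(\emptyset)=0$. The target is the recursion $f(S_G^{i+1})\ge\bigl(1-\tfrac{\alpha}{k}\bigr)f(S_G^i)+\tfrac{1}{k}OPT$, which I then solve in closed form.

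First I would sandwich $f(O\cup S_G^i)$ between two quantities. For the upper bound, monotonicity gives $OPT=f(O)\le f(O\cup S_G^i)$, and expanding $f(O\cup S_G^i)=f(S_G^i)+f_{S_G^i}(O)$, the submodular subadditivity of marginals together with the greedy selection rule $f_{S_G^i}(o)\le\rho_{i+1}$ for every $o$ yields $f_{S_G^i}(O)\le k\rho_{i+1}$, hence $f(O\cup S_G^i)\le f(S_G^i)+k\rho_{i+1}$. For the lower bound, I would instead expand $f(O\cup S_G^i)=f(O)+f_O(S_G^i)$ and telescope $f_O(S_G^i)=\sum_{l=1}^{i}f_{O\cup S_G^{l-1}}(s_l)$; applying the curvature inequality $f_{A\cup B}(a)\ge(1-\alpha)f_A(a)$ term by term (with $A=S_G^{l-1}$, $B=O$, $a=s_l$) gives $f_O(S_G^i)\ge(1-\alpha)f(S_G^i)$, so that $f(O\cup S_G^i)\ge OPT+(1-\alpha)f(S_G^i)$.

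Combining the two bounds eliminates $f(O\cup S_G^i)$ and produces the curvature-sharpened inequality $OPT\le\alpha\,f(S_G^i)+k\rho_{i+1}$, equivalently $f(S_G^{i+1})\ge\bigl(1-\tfrac{\alpha}{k}\bigr)f(S_G^i)+\tfrac{1}{k}OPT$. Setting $g_i=f(S_G^i)$ and $\beta=1-\alpha/k$, this is a linear recursion $g_{i+1}\ge\beta g_i+OPT/k$ with $g_0=0$; unrolling it gives $g_k\ge\frac{OPT}{k}\sum_{j=0}^{k-1}\beta^j=\frac{OPT}{k}\cdot\frac{1-\beta^k}{1-\beta}=\frac{1}{\alpha}\bigl(1-(1-\tfrac{\alpha}{k})^k\bigr)OPT$, which is exactly the claimed bound. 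As a sanity check, $\alpha\to1$ recovers the usual $1-1/e$ factor and $\alpha\to0$ (modular $f$) recovers optimality.

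The step I expect to be the main obstacle is the curvature lower bound, specifically the term-by-term use of $f_{O\cup S_G^{l-1}}(s_l)\ge(1-\alpha)f_{S_G^{l-1}}(s_l)$ when $s_l\in O$: there the left-hand side is $0$ while the right-hand side is $(1-\alpha)\rho_l\ge0$, so the inequality can fail on the overlapping elements. I would handle this by treating $S_G\cap O$ separately — either by ordering $O$ so that the shared elements are absorbed first, or by noting that the contribution of already-optimal elements only helps the bound — and I would verify that the normalization $f(\emptyset)=0$ (or a reduction to it) is in force so that the unrolled recursion starts cleanly from $g_0=0$.
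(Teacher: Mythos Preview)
Your high-level plan matches the paper's: expand $f(\Omega^{\ast}\cup S_G^i)$ in two ways, combine into a one-step recursion $OPT\le \alpha\,f(S_G^i)+k\,\rho_{i+1}$, and unroll. The paper does not give a standalone proof of Theorem~\ref{thm:curvature} (it is quoted from Conforti--Cornu\'ejols), but its alternate proof of Theorem~\ref{thm:submodCurv} in Appendix~\ref{append:pfThm2} with $\gamma=1$ has exactly this skeleton.

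The obstacle you single out is genuine, and neither of your suggested fixes closes it. After restricting the curvature step to $s_l\in S_G^i\setminus\Omega^{\ast}$ (where it is legitimate) and sharpening the upper bound to $(k-t_i)\rho_{i+1}$ with $t_i=|S_G^i\cap\Omega^{\ast}|$, the combined inequality becomes
\[
OPT \;\le\; \alpha\,f(S_G^i)\;+\;\sum_{s_j\in S_G^i\cap\Omega^{\ast}}\bigl[(1-\alpha)\rho_j-\rho_{i+1}\bigr]\;+\;k\,\rho_{i+1},
\]
and to reach the desired recursion you need the bracketed sum to be nonpositive. Total curvature alone does not deliver this: for submodular $f$ the greedy gains are non-increasing, so $\rho_j\ge\rho_{i+1}$, and $(1-\alpha)\rho_j-\rho_{i+1}$ can be strictly positive. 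Reordering $O$ does nothing here because the telescoping runs over $S_G^i$, not $O$; and the claim that overlap ``only helps the bound'' is false at the level of this inequality.

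The paper resolves precisely this point through its greedy curvature $\alpha_G$ (Definition~\ref{def:greedyCurv}): the \emph{second} minimum in~(\ref{eqn:greedyCurv}) is inserted exactly so that $(1-\alpha_G)f_{S_G^{j-1}}(s_j)\le f_{S_G^{i}}(s_{i+1})$ holds for every $s_j\in S_G^i\cap\Omega^{\ast}$, which kills the residual term by construction; the argument then passes from $\alpha_G$ to the (generalized/total) curvature $\alpha$. So the missing piece in your sketch is that second clause of the greedy-curvature definition --- or, alternatively, the LP construction of the original Conforti--Cornu\'ejols proof, which the present paper explicitly says it is avoiding.
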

Also, for any set function (may or may not be $\delta-$approximate), the performance of greedy can be lower bounded as explained in the next result.
\begin{theorem}[from \citep{bian}]
Let $f$ be a non-negative non-decreasing set function with submodularity ratio $\gamma \in [0,1]$ and curvature $\alpha \in [0,1]$. The output of greedy algorithm satisfies $f(\mathcal{S}_{G}) \geq \frac{1}{\alpha}(1 - (1-\frac{\alpha\gamma}{k})^{k})OPT$.
\label{thm:submodCurv}
\end{theorem}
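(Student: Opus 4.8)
The plan is to run the standard greedy-increment analysis and then inject the curvature to upgrade the decay rate. Write the greedy increments $g_i = f_{S_G^{i-1}}(s_i)$, so that $f(\mathcal{S}_G) = \sum_{i=1}^{k} g_i$, and work under the usual normalization $f(\emptyset)=0$ (any non-negative non-decreasing $f$ can be shifted to this). First I would establish the elementary per-step inequality
\begin{equation}
g_i \;\ge\; \frac{\gamma}{k}\bigl(OPT - f(S_G^{i-1})\bigr),
\end{equation}
which follows by chaining three facts: greedy maximality gives $g_i \ge \tfrac1k\sum_{o\in\Omega^*\setminus S_G^{i-1}} f_{S_G^{i-1}}(o)$, since the maximal marginal dominates the average of the at most $k$ optimal marginals; the submodularity ratio (\ref{eqn:genSubMod}) applied at $S=S_G^{i-1}$, $T=\Omega^*$ lower-bounds this sum by $\gamma\, f_{S_G^{i-1}}(\Omega^*)$; and monotonicity gives $f(S_G^{i-1}\cup\Omega^*)\ge f(\Omega^*)=OPT$. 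Telescoping the resulting recursion $OPT-f(S_G^i)\le(1-\tfrac{\gamma}{k})(OPT-f(S_G^{i-1}))$ already yields the weaker guarantee $f(\mathcal{S}_G)\ge\bigl(1-(1-\tfrac{\gamma}{k})^k\bigr)OPT$, so the entire content of the theorem lies in replacing $\gamma/k$ by $\alpha\gamma/k$ inside the power while paying the $1/\alpha$ prefactor.

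The second step is to sharpen the accounting using the curvature. The generalized-curvature definition rearranges to $f_{U\cup T}(s)\ge (1-\alpha)\, f_U(s)$ for $s\notin U\cup T$, and telescoping this over an added block gives the global inequality $f(S\cup T)-f(S)\ge(1-\alpha)\sum_{j\in T\setminus S} f_S(j)$. The idea is to use this $(1-\alpha)$ control so that the value greedy has already accumulated is not discarded when bounding the gap: instead of the crude $f(S_G^{i-1}\cup\Omega^*)\ge OPT$, one wants to retain a $(1-\alpha)$ fraction of $f(S_G^{i-1})$, turning the recursion into a linear recurrence of the form $f(S_G^i)\ge(1-\tfrac{\alpha\gamma}{k})f(S_G^{i-1}) + \tfrac{\gamma}{k}\,OPT$. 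Solving this with $f(\emptyset)=0$ gives $f(\mathcal{S}_G)\ge\tfrac{\gamma}{k}\,OPT\sum_{j=0}^{k-1}(1-\tfrac{\alpha\gamma}{k})^j = \tfrac1\alpha\bigl(1-(1-\tfrac{\alpha\gamma}{k})^k\bigr)OPT$, exactly the claim; the special cases $\alpha=1$ (recovering the $\gamma$-only bound) and $\gamma=1$ (recovering Theorem \ref{thm:curvature}) serve as consistency checks.

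The hard part is making this curvature injection rigorous, because the naive version of the target recurrence is actually false: it would require $f(S_G^{i-1}\cup\Omega^*)\ge OPT + (1-\alpha)f(S_G^{i-1})$, which already fails for near-modular $f$ whenever greedy and the optimum share elements (the shared value is double-counted). Hence the curvature cannot be applied to all of $f(S_G^{i-1})$ at once. The correct argument must split $\Omega^*\setminus S_G^{i-1}$ into the optimal elements greedy never selects, whose marginals are bounded directly by $g_i$ through greedy maximality, and the optimal elements greedy selects at later steps $s_j$ ($j\ge i$), whose marginals $f_{S_G^{i-1}}(s_j)$ must be compared along the greedy chain to $g_j=f_{S_G^{j-1}}(s_j)$ via the curvature inequality, and only then reassembled. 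This is precisely the bookkeeping isolated by the two-term greedy curvature of Definition \ref{def:greedyCurv}, so I would route the overlap terms through $\alpha_G\le\alpha$ to keep the recurrence clean; the final obstacle is then verifying that the reassembled recurrence still solves to the stated closed form.
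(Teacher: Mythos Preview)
Your target recurrence $f(S_G^{i})\ge(1-\tfrac{\alpha\gamma}{k})f(S_G^{i-1})+\tfrac{\gamma}{k}\,OPT$ is exactly the one the paper derives (their inequality $OPT\le\alpha\,f(S_G^{i})+\tfrac{k}{\gamma}g_{i+1}$ rearranges to it), and your diagnosis of the obstacle---the overlap $S_G\cap\Omega^\ast$ prevents crediting all of $f(S_G^{i-1})$ with the $(1-\alpha)$ factor---is correct. But the specific fix you sketch, splitting $\Omega^\ast\setminus S_G^{i-1}$ and comparing the optimal-element marginals $f_{S_G^{i-1}}(s_j)$ to later greedy gains $g_j$, goes the wrong way: greedy maximality and the curvature inequality only give \emph{upper} bounds on those marginals, and chaining upper bounds on $\sum_{o\in\Omega^\ast\setminus S_G^{i-1}}f_{S_G^{i-1}}(o)$ cannot produce the $(1-\alpha)f(S_G^{i-1})$ credit you need on the other side of the inequality.

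The paper's route is to reverse the roles: expand $f(\Omega^\ast\cup S_G^{i})$ from $\Omega^\ast$ by appending the greedy elements in order, so that the first clause of the greedy curvature (applied to each $s_j\in S_G^{i}\setminus\Omega^\ast$ when added on top of $\Omega^\ast\cup S_G^{j-1}$) gives the lower bound $f(\Omega^\ast\cup S_G^{i})\ge OPT+(1-\alpha_G)\sum_{s_j\in S_G^{i}\setminus\Omega^\ast}g_j$. Equating this with the expansion $f(\Omega^\ast\cup S_G^{i})=f(S_G^{i})+f_{S_G^{i}}(\Omega^\ast)$ and bounding $f_{S_G^{i}}(\Omega^\ast)\le\tfrac{1}{\gamma}(k-t_i)\,g_{i+1}$ via the submodularity ratio and greedy maximality produces $OPT\le\alpha_G f(S_G^{i})+\sum_{s_j\in S_G^{i}\cap\Omega^\ast}\bigl[(1-\alpha_G)g_j-g_{i+1}\bigr]+\tfrac{k}{\gamma}g_{i+1}$. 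The bracketed overlap terms are then killed by the \emph{second} clause of Definition~\ref{def:greedyCurv}, which (taking $a=s_j$ and outer index $j\le i$) says precisely $(1-\alpha_G)g_j\le g_{i+1}$. So the curvature is applied to the greedy increments added to $\Omega^\ast$, not to the optimal marginals; that reversal is the missing idea in your sketch.
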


%
%

\begin{algorithm}
\SetKw{KwInitialize}{Initialize}
\SetArgSty{normal}
%
\KwOut{$S_{G}$}

\KwInitialize{$S_{G} = \phi$\;}

\Repeat{$\vert S_{G}\vert = k$}{

$s^{\ast} = \text{arg}\max\limits_{s \in \Omega\setminus S_{G}}f_{S_{G}}(s)$\;

$S_{G} \leftarrow S_{G} \cup s^{\ast}$\;

}
\caption{Greedy Selection Algorithm}
\label{alg:greedy_alg}
\end{algorithm}
%
For $\delta-$approximate submodular functions it can be shown that the greedy selection algorithm offers constant performance bound as captured in the next result.
\begin{theorem}
For a given $\delta-$approximate submodular function with submodularity ratio $\gamma_{f}$ and $\alpha_{\delta}$ such that $\delta \geq \frac{1-\gamma_{f}}{1+\gamma_{f}}$, the greedy algorithm has guaranteed constant performance.
\label{thm:deltaApproxGreedy}
\begin{align}
&f(S_{G}) \geq \frac{1}{\alpha_{\delta}\frac{1-\delta}{1+\delta} + \frac{2\delta}{1+\delta}}\left(1 - \left(1 - \frac{1}{k}\left(\alpha_{\delta}\frac{1-\delta}{1+\delta}+\frac{2\delta}{1+\delta}\right)\left(\frac{1-\delta}{1+\delta}\right)\right)^{k}\right)OPT.
\end{align}
\end{theorem}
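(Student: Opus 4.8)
The plan is to avoid running a fresh greedy analysis and instead \emph{reduce} the statement to Theorem\,\ref{thm:submodCurv}, by showing that the $\delta$-approximation of $f$ by the curvature-minimizing $g\in\text{ROS}(f,\delta)$ pins the intrinsic submodularity ratio $\gamma_f$ and the intrinsic (generalized) curvature $\alpha_f$ of $f$ itself into a controlled range, and then feeding these into the known bound. Concretely, write $\gamma' = \frac{1-\delta}{1+\delta}$ and $c = \alpha_\delta\frac{1-\delta}{1+\delta}+\frac{2\delta}{1+\delta}=1-\gamma'(1-\alpha_\delta)$; I will establish $\gamma_f\ge\gamma'$ and $\alpha_f\le c$, and then check that substituting these two bounds into the guarantee of Theorem\,\ref{thm:submodCurv} only weakens it.

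The bound $\gamma_f\ge\gamma'$ is exactly the hypothesis $\delta\ge\frac{1-\gamma_f}{1+\gamma_f}$ rearranged (equivalently, it follows from (\ref{eqn:altDeltaApprox}) together with the submodularity of $g$, as in the proof of Lemma\,\ref{lemm:necessaryCond}). For the curvature, fix any admissible $S,T$ and $s\in S\setminus T$. Bounding the numerator below and the denominator above through (\ref{eqn:altDeltaApprox}), and then using that the generalized curvature of the submodular $g$ coincides with its total curvature $\alpha_\delta$, gives
\[
\frac{f_{S\setminus\{s\}\cup T}(s)}{f_{S\setminus\{s\}}(s)}\;\ge\;\frac{1-\delta}{1+\delta}\,\frac{g_{S\setminus\{s\}\cup T}(s)}{g_{S\setminus\{s\}}(s)}\;\ge\;\frac{1-\delta}{1+\delta}(1-\alpha_\delta).
\]
Taking the minimum over $S,T,s$ and subtracting from $1$ yields $\alpha_f\le 1-\gamma'(1-\alpha_\delta)=c$; note also $c\le 1$ since $\alpha_\delta\le 1$, so $c$ is a legitimate curvature value.

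Finally, apply Theorem\,\ref{thm:submodCurv} to $f$ with its own parameters to get $f(S_G)\ge H(\alpha_f,\gamma_f)\,OPT$, where $H(\alpha,\gamma)=\frac1\alpha\big(1-(1-\frac{\alpha\gamma}{k})^k\big)$, and then replace $(\alpha_f,\gamma_f)$ by $(c,\gamma')$. The crux — and the step I expect to require the most care — is verifying the monotonicity of $H$: that $H$ is non-increasing in $\alpha$ (because $\alpha\mapsto 1-(1-\frac{\alpha\gamma}{k})^k$ is concave with value $0$ at $\alpha=0$, so dividing by $\alpha$ produces a decreasing chord-slope) and non-decreasing in $\gamma$. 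Granting this, $\alpha_f\le c$ and $\gamma_f\ge\gamma'$ give $H(\alpha_f,\gamma_f)\ge H(c,\gamma_f)\ge H(c,\gamma')$, and $H(c,\gamma')$ is precisely the claimed right-hand side. An alternative, self-contained route would run the greedy recursion directly and control the two error sources with the greedy curvature $\alpha_G$ of Definition\,\ref{def:greedyCurv}, but the reduction above is the shortest rigorous path.
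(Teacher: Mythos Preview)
Your reduction is correct. The two key estimates $\gamma_f\ge\gamma'=\tfrac{1-\delta}{1+\delta}$ and $\alpha_f\le c=1-\gamma'(1-\alpha_\delta)$ are sound (the second relies on the paper's remark that for submodular $g$ the generalized curvature collapses to the total curvature $\alpha_\delta$), and the monotonicity of $H(\alpha,\gamma)=\tfrac{1}{\alpha}\big(1-(1-\tfrac{\alpha\gamma}{k})^k\big)$ in each argument is exactly as you describe. Substituting $(\alpha_f,\gamma_f)\to(c,\gamma')$ in Theorem~\ref{thm:submodCurv} reproduces the claimed bound verbatim.

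This is, however, a genuinely different route from the paper's. The paper does \emph{not} invoke Theorem~\ref{thm:submodCurv} as a black box; instead it runs the greedy recursion from scratch, expanding $f(\Omega^\ast\cup S_G^i)$ two ways and using the greedy curvature $\alpha_G$ of Definition~\ref{def:greedyCurv} to eliminate the cross term $\sum_{s_j\in S_G^i\cap\Omega^\ast}\{(1-\alpha_G)f_{S_G^{j-1}}(s_j)-f_{S_G^i}(s_{i+1})\}$. Only at the end does it bound $\alpha_G\le c$ via (\ref{eqn:altDeltaApprox}) and solve the resulting recursion $f(\Omega^\ast)\le c\,f(S_G^i)+k\tfrac{1+\delta}{1-\delta}f_{S_G^i}(s_{i+1})$. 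Your approach is shorter and makes transparent that Theorem~\ref{thm:deltaApproxGreedy} is essentially a corollary of Theorem~\ref{thm:submodCurv} once one translates $(\delta,\alpha_\delta)$ into bounds on $(\gamma_f,\alpha_f)$. The paper's approach, by contrast, is self-contained and is meant to demonstrate the utility of the greedy-curvature machinery introduced in (\ref{eqn:greedyCurv}); indeed, the same machinery is then reused in Appendix~\ref{append:pfThm2} to give a simplified proof of Theorem~\ref{thm:submodCurv} itself, which your reduction treats as given. You correctly anticipate this alternative in your closing remark.
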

\begin{sproof}
The detailed proof is provided in the Appendix. The main idea of the proof is to track the deviation of the optimal solution with the greedy algorithm at each step. The $\delta-$submodularity property can be exploited to bound the given function marginals in terms of some submodular function which enable us to apply the properties of submodularity. The greedy curvature defined in (\ref{eqn:greedyCurv}) is used to bound the successive marginals occurring at any $i$th step. Formally, let us denote the output of the greedy algorithm at the $i$th stage as $S_{G}^{i}$. The term $f(\Omega^{\ast} \cup S_{G}^{i})$ can be expanded in two different ways as follows:
\begin{eqnarray*}
f(\Omega^{\ast} \cup S_{G}^{i}) \geq f(\Omega^{\ast}) + (1-\alpha_{G})\sum\limits_{s_{j}\in S_{G}^{i}\setminus \Omega^{\ast}}f_{S_{G}^{j-1}}(s_{j}),
\end{eqnarray*}
and
\begin{equation}
f(\Omega^{\ast}\cup S_{G}^{i}) = f(S_{G}^{i}) + f_{S_{G}^{i}}(\Omega^{\ast}).
\label{eqn:midEqn2}
\end{equation}
The above equations can be combined and re-written after using the property of $\delta-$submodularity from (\ref{eqn:altDeltaApprox}) as
\begin{eqnarray*}
f(\Omega^{\ast}) &\leq& \alpha_{G}f(S_{G}^{i}) + (1-\alpha_{G})\sum\limits_{s_{j}\in S_{G}^{i}\cap\Omega^{\ast}}f_{S_{G}^{j-1}}(s_{j}) + \frac{1+\delta}{1-\delta}\sum\limits_{\omega\in\Omega^{\ast}\setminus S_{G}^{i}}f_{S_{G}^{i}}(\omega).
\label{eqn:midEqn3}
\end{eqnarray*}
Besides, the greedy algorithm at the $i$th step would select the index $s_{i+1}$ as
\begin{equation}
f_{S_{G}^{i}}(s_{i+1}) = \max\limits_{a \in \Omega\backslash S_{G}^{i}}f_{S_{G}^{i}}(a).
\label{eqn:greedyNextStep}
\end{equation}
Therefore, the equation (\ref{eqn:midEqn3}) can be re-written by upper bounding the last term using (\ref{eqn:greedyNextStep}) as
\begin{eqnarray*}
f(\Omega^{\ast})&\leq& \alpha_{G}f(S_{G}^{i}) + \sum\limits_{s_{j}\in S_{G}^{i}\cap\Omega^{\ast}}\left\{(1 - \alpha_{G})f_{S_{G}^{j-1}}(s_{j}) - f_{S_{G}^{i}}(s_{i+1})\right\} + k\frac{1+\delta}{1-\delta}f_{S_{G}^{i}}(s_{i+1}).
\end{eqnarray*}
The middle term in the above equation can be eliminated by exploiting the second term in the definition of greedy curvature in (\ref{eqn:greedyCurv}). The remaining inequality then reduces to the form of $\lambda u_{i+1} \geq c - \sum\nolimits_{j=1}^{i}u_{j}$, which has the solution of the form $\sum\nolimits_{i=1}^{k}u_{i} \geq c(1 - (1 - \frac{1}{\lambda})^{k})$ obtained using mathematical induction. Finally, we can write
\begin{align*}
f(S_{G}) &= \sum\limits_{i=1}^{k}f_{S_{G}^{i-1}}(s_{i})\\
&\geq \frac{1}{\frac{2\delta}{1+\delta} + \frac{1-\delta}{1+\delta}\alpha_{\delta}}
\left(1 - \left(1 - \frac{1}{k}\left(\frac{2\delta}{1+\delta} + \frac{1-\delta}{1+\delta}\alpha_{\delta}\right)\left(\frac{1-\delta}{1+\delta}\right)\right)^{k}\right)OPT.
\label{eqn:finalPerfBnd}
\end{align*}
\end{sproof}
\begin{table*}[t]
\centering
\begin{tabular}{|c|c|c|}
\hline
Approximation & Feasibility & Performance bound\\
\hline
$(1-\delta)g_{S}(a)\leq f_{S}(a) \leq (1+\delta)g_{S}(a)$ & $\delta \geq \frac{1-\gamma_{f}}{1+\gamma_{f}}$ & $\frac{1}{\alpha_{\delta}\frac{1-\delta}{1+\delta} + \frac{2\delta}{1+\delta}}\left(1 - \text{exp}\{-(\alpha_{\delta}\frac{1-\delta}{1+\delta} + \frac{2\delta}{1+\delta})\frac{1-\delta}{1+\delta}\}\right)$\\
\hline
$\delta_{l}g_{S}(a)\leq f_{S}(a) \leq \delta_{u}g_{S}(a)$ & $\delta_{l} \leq \delta_{u}\gamma_{f}$ & $\frac{1}{1 - \frac{\delta_{l}}{\delta_{u}}(1 - \alpha_{\delta})}\left(1 - \text{exp}\{-(1 - \frac{\delta_{l}}{\delta_{u}}(1 - \alpha_{\delta}))\frac{\delta_{l}}{\delta_{u}}\}\right)$\\
\hline
$g_{S}(a)\leq f_{S}(a)\leq \delta g_{S}(a)$ & $\delta \geq \frac{1}{\gamma_{f}}$ & $\frac{1}{1 - \frac{1 - \alpha_{\delta}}{\delta}}\left(1 - \text{exp}\{-(1 - \frac{1 - \alpha_{\delta}}{\delta})\frac{1}{\delta}\}\right)$\\
\hline
\end{tabular}
\caption{Feasibility and performance bound for different approximation structures.}
\label{tab:approxTab}
\end{table*}
For $\delta=0$, the above result is then same as Theorem\,\ref{thm:curvature} because  $f$ would be submodular. For $\delta \geq \frac{1-\gamma_{f}}{1+\gamma_{f}}$, the performance guarantee reduces due to divergence from submodularity property. While $\alpha_{\delta}$ corresponds to the total curvature of a submodular function used to approximate the given function $f$, the term $\frac{2\delta}{1+\delta}$ can be looked upon as penalty paid for deviating from this submodular function.

It is worth to mention that the proofs of the Theorem\,\ref{thm:curvature} and Theorem\,\ref{thm:submodCurv} can be simplified to a large extent if we use new strategies employing our proposed unified definition of the greedy curvature in (\ref{eqn:greedyCurv}). The direction of constructing the Linear Programs (LP) introduced in \citep{conforti} can then be skipped. Consequently, we provide an alternate version of the proof of Theorem\,\ref{thm:submodCurv} in the appendix for reference.

The definition of approximate submodularity in (\ref{eqn:altDeltaApprox}) is sufficient for identifying such functions but it often happens that the upper and lower bounds may not be symmetric as we will see in the next section. However, instead of making the bounds loose to bring regularity, such asymmetric behavior can be leveraged to get tighter bounds for the performance of greedy algorithm from Theorem\,\ref{thm:deltaApproxGreedy}. The feasibility conditions from Lemma\,\ref{lemm:necessaryCond} will also change accordingly. The variations of such asymmetric behavior is summarized in Table\,\ref{tab:approxTab}. Next, we identify some important examples that can be recognized as approximate submodular functions according to the Definition\,\ref{def:approxSubmod}.

\section{Applications}
\label{sec:Appl}

In this section, we identify some critical functions which are not submodular and appear a lot in the applications like sensor selection, sparse learning, etc. These functions are known to perform well with greedy search techniques, and we will establish that being $\delta-$approximate submodular helps in improving their existing performance bounds as well as developing some closeness guarantee to being submodular.

\subsection{Trace of inverse Gramian}

Let us consider a matrix $W_{S}$, which for the rest of the paper is denoted in its simplest form as $W_{S} = \Lambda_{0} + \sum\nolimits_{s\in S}x_{s} x_{s}^{T}$, where $x_{i}\in\mathbb{R}^{n}$ is taken from the data matrix $X\in\mathbb{R}^{n\times N}, X = [x_{1},x_{2},\hdots,x_{N}]$. The ordered eigenvalues of $W_{S}$ are denoted as $\lambda_{n}\geq \hdots \geq \lambda_{1}$.

The negative trace of the covariance matrix inverse is used as criteria for Bayesian-A optimality in \citep{krause1}. The problem is stated in \citep{bian} as variance reduction of the Bayesian estimated unknown. For the sake of completeness, we will define the problem here as follows. Given some set of observations ${\bf y}\in \mathbb{R}^{N}$ and the linear model $y = X^{T}\theta + w$ which relates the parameter $\theta$ with observations in the presence of Gaussian noise $w\sim\mathcal{N}(0,I_{N})$. The problem of sensor selection can be stated as minimizing the conditional variance of $\theta$ with a given sensor budget. If $S\subseteq \{1,2,\hdots,N\}$ denotes the set of selected sensors, then we have $y_{S} = X_{S}^{T}\theta + w$, where $y_{S}$ are the set of observations indexed according to the elements in $S$ and $X_{S}$ has columns taken from $X$ accordingly. With the Gaussian prior assumption for $\theta\sim\mathcal{N}(0,\Lambda_{0}^{-1})$ with $\Lambda_{0} = \beta^{2}I_{N}$, the conditional covariance of $\theta$ given $y_{S}$ can be written as $\Sigma_{\theta\vert y_{S}} = (\Lambda_{0} + X_{S}X_{S}^{T})^{-1}$. The objective function is defined as
\begin{equation}
f(S) = \text{tr}(\Lambda_{0}) - \text{tr}(\Lambda_{0} + \sum\nolimits_{s\in S}x_{s} x_{s}^{T})^{-1}.
\label{eqn:negTraceInv}
\end{equation}
The negative trace of the inverse of matrix, $W_{S}$, is a non-submodular function. We will see that it can be labeled as $\delta-$approximate submodular to a known submodular function, log determinant of $W_{S}$, using the following result.

\begin{proposition}
The negative trace of matrix inverse, $f(S) = -\text{tr}(W_{S}^{-1})$ is approximately submodular to the submodular function, log determinant of the matrix, $g(S) = \text{log}\,\text{det}(W_{S})$ with the upper and lower bounds as $\delta_{u} = \frac{1}{\lambda_{1}(W_{\phi})}$ and $\delta_{l} = \frac{1}{\lambda_{n}(W_{\Omega})}$, respectively.
\label{prop:negTrace}
\end{proposition}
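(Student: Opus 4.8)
The plan is to reduce the claim to the two-sided estimate $\frac{1}{\lambda_n(W_\Omega)}\, g_S(a) \le f_S(a) \le \frac{1}{\lambda_1(W_\phi)}\, g_S(a)$ holding for every $S \subseteq \Omega$ and $a \in \Omega\setminus S$, which is exactly the asymmetric form $\delta_l\, g_S(a) \le f_S(a) \le \delta_u\, g_S(a)$ from Table~\ref{tab:approxTab} with $\delta_l = 1/\lambda_n(W_\Omega)$ and $\delta_u = 1/\lambda_1(W_\phi)$. Writing $W_{S\cup\{a\}} = W_S + x_a x_a^T$, the two marginals are $f_S(a) = \mathrm{tr}(W_S^{-1}) - \mathrm{tr}(W_{S\cup\{a\}}^{-1})$ and $g_S(a) = \log\det W_{S\cup\{a\}} - \log\det W_S$. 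A direct evaluation via Sherman--Morrison and the matrix determinant lemma gives $f_S(a) = x_a^T W_S^{-2} x_a / (1 + x_a^T W_S^{-1} x_a)$ and $g_S(a) = \log(1 + x_a^T W_S^{-1} x_a)$; I would deliberately \emph{avoid} comparing these two closed forms head-on, since the $\log$ and the denominator $1+x_a^T W_S^{-1}x_a$ do not combine cleanly and the resulting estimates turn out too loose to recover the spectral constants.

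Instead, the key step is to \emph{linearize} both marginals. I would introduce the interpolation $A_t = W_S + t\, x_a x_a^T$ for $t \in [0,1]$, so that $A_0 = W_S$ and $A_1 = W_{S\cup\{a\}}$. Differentiating along this path, $\frac{d}{dt}\mathrm{tr}(A_t^{-1}) = -x_a^T A_t^{-2} x_a$ and $\frac{d}{dt}\log\det A_t = x_a^T A_t^{-1} x_a$, whence $f_S(a) = \int_0^1 x_a^T A_t^{-2} x_a\, dt$ and $g_S(a) = \int_0^1 x_a^T A_t^{-1} x_a\, dt$. Now both quantities are integrals of quadratic forms in a \emph{single} matrix $A_t$, so they can be compared pointwise: since $\lambda_1(A_t)\, A_t^{-2} \preceq A_t^{-1} \preceq \lambda_n(A_t)\, A_t^{-2}$ in the Loewner order, I obtain $\frac{1}{\lambda_n(A_t)}\, x_a^T A_t^{-1} x_a \le x_a^T A_t^{-2} x_a \le \frac{1}{\lambda_1(A_t)}\, x_a^T A_t^{-1} x_a$ for every $t$.

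Finally I would remove the $t$-dependence through monotonicity of the extreme eigenvalues. Because $\Lambda_0 = W_\phi \preceq W_S \preceq A_t \preceq W_{S\cup\{a\}} \preceq W_\Omega$, Weyl's inequalities give $\lambda_1(A_t) \ge \lambda_1(W_\phi)$ and $\lambda_n(A_t) \le \lambda_n(W_\Omega)$ uniformly in $t$. Substituting these into the pointwise bound and integrating over $t \in [0,1]$ yields $\frac{1}{\lambda_n(W_\Omega)}\, g_S(a) \le f_S(a) \le \frac{1}{\lambda_1(W_\phi)}\, g_S(a)$ for all $S$ and $a$, which is precisely the asserted $\delta$-approximate submodularity with $\delta_l = 1/\lambda_n(W_\Omega)$ and $\delta_u = 1/\lambda_1(W_\phi)$. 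An equivalent route replaces the $t$-integral by summing over the sorted eigenvalues $\mu_i \le \nu_i$ of $W_S$ and $W_{S\cup\{a\}}$ (paired via Weyl's monotonicity theorem) and writing each summand as $\int_{\mu_i}^{\nu_i} s^{-2}\,ds$ versus $\int_{\mu_i}^{\nu_i} s^{-1}\,ds$, with all eigenvalues confined to $[\lambda_1(W_\phi),\lambda_n(W_\Omega)]$.

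The main obstacle I anticipate is exactly the temptation to work with the Sherman--Morrison expressions: comparing $x_a^T W_S^{-2} x_a/(1+x_a^T W_S^{-1} x_a)$ against $\log(1 + x_a^T W_S^{-1} x_a)$ forces one to juggle the scalar inequalities $\frac{u}{1+u} \le \log(1+u) \le u$ together with a bound on $x_a^T W_S^{-2} x_a$, and the stray $(1+u)$ factor blocks the two sides from matching the spectral constants $1/\lambda_n(W_\Omega)$ and $1/\lambda_1(W_\phi)$. The integral linearization is what sidesteps this difficulty, collapsing the whole proposition to the trivial comparison of $s^{-2}$ with $s^{-1}$ over the common spectral window $[\lambda_1(W_\phi), \lambda_n(W_\Omega)]$.
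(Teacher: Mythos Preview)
Your argument is correct, but the route differs from the paper's. The paper expands both marginals directly in the eigenvalues,
\[
f_S(a)=\sum_{i=1}^{n}\frac{\lambda_i(W_{S\cup\{a\}})-\lambda_i(W_S)}{\lambda_i(W_{S\cup\{a\}})\lambda_i(W_S)},\qquad
g_S(a)=\sum_{i=1}^{n}\log\!\left(\frac{\lambda_i(W_{S\cup\{a\}})}{\lambda_i(W_S)}\right),
\]
and then applies the scalar inequalities $1-1/x\le\log x\le x-1$ termwise to pull out factors $1/\lambda_i(W_S)$ and $1/\lambda_i(W_{S\cup\{a\}})$, which are finally bounded by $1/\lambda_1(W_\phi)$ and $1/\lambda_n(W_\Omega)$. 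Your main path instead linearizes via the interpolation $A_t=W_S+t\,x_ax_a^T$, reduces everything to the Loewner comparison $\tfrac{1}{\lambda_n(A_t)}A_t^{-1}\preceq A_t^{-2}\preceq\tfrac{1}{\lambda_1(A_t)}A_t^{-1}$, and integrates. This avoids any explicit eigenvalue decomposition and makes the mechanism transparent as a pointwise comparison of $s^{-2}$ with $s^{-1}$; it also generalizes readily to other spectral-function pairs of the form $\mathrm{tr}\,\psi(W_S)$ versus $\mathrm{tr}\,\varphi(W_S)$ whenever $\psi'/\varphi'$ is monotone. The paper's version, by contrast, is completely elementary and needs no matrix calculus. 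The ``equivalent route'' you sketch at the end, writing each eigenvalue summand as $\int_{\mu_i}^{\nu_i}s^{-2}\,ds$ versus $\int_{\mu_i}^{\nu_i}s^{-1}\,ds$, is in fact exactly the paper's argument in integral dress: the bounds $1-1/x\le\log x\le x-1$ are precisely what one gets by comparing those two integrals with $s\in[\mu_i,\nu_i]$.
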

Interestingly, this result establishes the relationship between two functions, negative trace of matrix inverse (non-submodular) and log determinant of the matrix (submodular), which may not look intuitive at first glance. We will see the usefulness of this kind of association in the Experiments section.

The matrix of the form of $W_{S}$ can be extended to Gramian which has applications in the domain of complex networks. The controllability of complex networks plays a fundamental role in network science, cyber-physical systems, and systems biology. The steering of networks has applications ranging from drug design for cancer networks to electrical grids. It has been realized in \citep{Pasqualetti} that negative trace of the observability Gramian inverse can be used to quantize the controllability of the networks subject to some sensors.

\subsection{Minimum Eigenvalue}

While controllability helps to steer the networks, the observability criterion of complex networks is the problem of selecting sensors and an estimation method such that the networks can be reconstructed from the measurements collected by the selected sensors. It has been stated in \citep{Pasqualetti} that minimum eigenvalue of the so-called Observability Gramian can be used to quantize the energy associated with a network state. The maximization of minimum eigenvalue also serves as criteria for matrix inversion in the presence of numerical errors for sparse matrices.

The minimum eigenvalue of the Gramian
\begin{equation}
f(S) = \lambda_{1}(W_{S}),
\label{eqn:minEigGramian}
\end{equation}
is non-submodular function \citep{Summers2016OnSA} but we now show that it can be modeled as a $\delta-$approximate submodular function.

\begin{proposition}
The minimum eigenvalue of Gramian, $f(S) = \lambda_{1}(W_{S})$ is approximately submodular to the modular function, trace of Gramian, $g_{1}(S) = \text{tr}(W_{S})$ with the upper and lower bounds as $\delta_{u}$ and $\delta_{l}$, respectively, where 
\begin{eqnarray*}
\delta_{u} &=& 1 - \frac{n-1}{n}\min\limits_{\omega \in\Omega}\frac{\lambda_{1}(W_{\omega})}{\lambda_{n}(W_{\omega})},\nonumber\\
\delta_{l}&=& \frac{1}{n}\min\limits_{\omega\in\Omega}\frac{\lambda_{1}(W_{\omega})}{\lambda_{n}(W_{\omega})}.
\end{eqnarray*}
\label{prop:minEig1}
\end{proposition}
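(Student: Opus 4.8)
The plan is to reduce the proposition to the asymmetric marginal bound $\delta_{l}\,g_{1,S}(a)\leq f_{S}(a)\leq \delta_{u}\,g_{1,S}(a)$ from the second row of Table\,\ref{tab:approxTab}, and to work entirely at the level of marginal gains. First I would record the two marginals. Because $W_{S\cup\{a\}}=W_{S}+x_{a}x_{a}^{T}$ and the trace is additive, $g_{1}$ is modular with $g_{1,S}(a)=\text{tr}(x_{a}x_{a}^{T})=\|x_{a}\|^{2}$, independent of $S$. The non-trivial object is the shift of the smallest eigenvalue under a rank-one positive semidefinite update, $f_{S}(a)=\lambda_{1}(W_{S}+x_{a}x_{a}^{T})-\lambda_{1}(W_{S})$, so the entire statement is a comparison of this shift with $\|x_{a}\|^{2}$.

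The main tool would be exact bookkeeping of the eigenvalue shifts. Setting $d_{i}=\lambda_{i}(W_{S\cup\{a\}})-\lambda_{i}(W_{S})$, monotonicity of adding a PSD matrix (Weyl) gives $d_{i}\geq 0$, additivity of the trace gives $\sum_{i=1}^{n}d_{i}=\|x_{a}\|^{2}$, and Cauchy interlacing for a rank-one update confines each shift to $d_{i}\leq \lambda_{i+1}(W_{S})-\lambda_{i}(W_{S})$. For the upper bound I would argue that the remaining $n-1$ eigenvalues must jointly absorb a definite fraction of the trace increment, so that $f_{S}(a)=d_{1}=\|x_{a}\|^{2}-\sum_{i\geq 2}d_{i}$ is controlled from above; quantifying that fraction through the spectral ratio $\lambda_{1}(W_{S})/\lambda_{n}(W_{S})$ together with the factor $1/n$ coming from averaging over the $n$ eigenvalues is what should produce $\delta_{u}=1-\tfrac{n-1}{n}\min_{\omega}\lambda_{1}(W_{\omega})/\lambda_{n}(W_{\omega})$.

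For the lower bound I would combine the Rayleigh-quotient estimate $f_{S}(a)\geq (x_{a}^{T}u')^{2}$, where $u'$ is the bottom eigenvector of $W_{S\cup\{a\}}$, with the elementary min-versus-average inequalities $\lambda_{1}(W)\leq \tfrac{1}{n}\text{tr}(W)$ and $\lambda_{1}(W)\geq \tfrac{\lambda_{1}(W)}{\lambda_{n}(W)}\tfrac{1}{n}\text{tr}(W)$ to extract the factor $\tfrac{1}{n}\,\lambda_{1}/\lambda_{n}$. To make the constants uniform over all pairs $(S,a)$, I would show that the extreme values of $\lambda_{1}/\lambda_{n}$ relevant to a single update are dominated by the single-element Gramians $W_{\omega}=\Lambda_{0}+x_{\omega}x_{\omega}^{T}$, which is exactly why $\min_{\omega\in\Omega}$ enters both $\delta_{u}$ and $\delta_{l}$; monotonicity in $S$ and the modularity of the trace are what let these single-element ratios govern the general case.

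The step I expect to be the main obstacle is the lower bound on $f_{S}(a)$. The difficulty is structural: a rank-one update to a nearly isotropic $W_{S}$ can leave $\lambda_{1}$ almost unchanged — adding $x_{a}x_{a}^{T}$ to a multiple of the identity does not move the smallest eigenvalue at all when $n\geq 2$ — so $d_{1}$ can be far below $\|x_{a}\|^{2}$. A uniform positive lower bound therefore cannot come from interlacing alone; it must invoke the spectral-ratio quantity $\min_{\omega}\lambda_{1}(W_{\omega})/\lambda_{n}(W_{\omega})$ to exclude this degeneracy, and verifying that the surviving factor is exactly $\tfrac{1}{n}$ in the worst case is the crux of the argument. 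Once both bounds are in hand, checking the feasibility condition $\delta_{l}\leq \delta_{u}\gamma_{f}$ recorded in Table\,\ref{tab:approxTab} is a routine consequence.
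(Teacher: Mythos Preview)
Your approach is genuinely different from the paper's and, as you yourself suspect at the end, the obstacle you flag is not a technicality but a real obstruction to the route you chose. The paper does \emph{not} track the individual shifts $d_{i}=\lambda_{i}(W_{S\cup\{a\}})-\lambda_{i}(W_{S})$, does not use Cauchy interlacing, and does not invoke any Rayleigh-quotient estimate. Instead it applies Weyl's inequality once, in the form
\[
\lambda_{1}(W_{\{a\}})\;\le\; f_{S}(a)=\lambda_{1}(W_{S\cup\{a\}})-\lambda_{1}(W_{S})\;\le\;\lambda_{n}(W_{\{a\}}),
\]
treating $W_{\{a\}}$ itself (not the rank-one piece $x_{a}x_{a}^{T}$) as the increment; it also takes $g_{1,S}(a)=\operatorname{tr}(W_{\{a\}})$, not $\|x_{a}\|^{2}$. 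After that step there is no $S$-dependence left, and the stated constants drop out of two elementary eigenvalue-versus-trace inequalities for the single matrix $W_{\{a\}}$:
\[
\lambda_{n}(W_{\{a\}})=\operatorname{tr}(W_{\{a\}})\Bigl(1-\tfrac{\sum_{i<n}\lambda_{i}}{\sum_{i}\lambda_{i}}\Bigr)\le\Bigl(1-\tfrac{n-1}{n}\tfrac{\lambda_{1}(W_{\{a\}})}{\lambda_{n}(W_{\{a\}})}\Bigr)\operatorname{tr}(W_{\{a\}}),
\qquad
\lambda_{1}(W_{\{a\}})\ge\tfrac{\lambda_{1}(W_{\{a\}})}{n\,\lambda_{n}(W_{\{a\}})}\operatorname{tr}(W_{\{a\}}),
\]
followed by minimizing the ratio over $a\in\Omega$. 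This is why $\min_{\omega}\lambda_{1}(W_{\omega})/\lambda_{n}(W_{\omega})$ appears immediately, with no separate argument needed ``to show that the extreme values \dots\ are dominated by the single-element Gramians''.

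Your plan, by contrast, works with the rank-one perturbation $x_{a}x_{a}^{T}$ and $g_{1,S}(a)=\|x_{a}\|^{2}$, and in that setting neither bound can be obtained. You already noticed that $d_{1}$ can be zero (add $x_{a}x_{a}^{T}$ to a multiple of the identity), killing the lower bound. The mirror phenomenon kills your sketched upper bound: if $x_{a}$ is aligned with the bottom eigenvector of $W_{S}$ and the gap to $\lambda_{2}(W_{S})$ is large, then $d_{1}=\|x_{a}\|^{2}$ and $d_{i}=0$ for $i\ge 2$, so ``the remaining $n-1$ eigenvalues must jointly absorb a definite fraction of the trace increment'' is simply false, and no $\delta_{u}<1$ can hold. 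The fix is exactly what the paper does: bound $f_{S}(a)$ by the extreme eigenvalues of the full-rank increment $W_{\{a\}}$ (which is positive definite because of $\Lambda_{0}=\beta^{2}I$) rather than by properties of the rank-one piece, and take the trace marginal accordingly.
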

The above result represents the non-submodular function $f$ as an approximation to a modular function. Since $g_{1}$ is modular, therefore its curvature is $\alpha = 0$. This property can be leveraged in Table\,\ref{tab:approxTab} to get tight bounds when $n$ is not large. Next, we show that there exists another function in the ROS of $f$ through the following result.

\begin{proposition}
The minimum eigenvalue of Gramian, $f(S) = \lambda_{1}(W_{S})$ is approximately submodular to the maximum eigenvalue of Gramian, $g_{2}(S) = \lambda_{n}(W_{S})$, which is submodular with the upper and lower bounds as $\delta_{u} = \max\limits_{\omega\in\Omega}\frac{\lambda_{n}(W_{\omega})}{\lambda_{1}(W_{\omega})}$ and $\delta_{l}= \min\limits_{\omega\in\Omega}\frac{\lambda_{1}(W_{\omega})}{\lambda_{n}(W_{\omega})}$, respectively.
\label{prop:minEig2}
\end{proposition}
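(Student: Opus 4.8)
The target is the pair of marginal inequalities $\delta_{l}\,g_{2,S}(a)\le f_{S}(a)\le \delta_{u}\,g_{2,S}(a)$ for all $S\subseteq\Omega$ and $a\in\Omega\setminus S$, which is exactly the asymmetric approximation structure in the second row of Table~\ref{tab:approxTab} (so that $f$ is $\delta$-approximate submodular to $g_{2}$ once the feasibility condition $\delta_{l}\le\delta_{u}\gamma_{f}$ is met). The key structural fact is that moving from $S$ to $S\cup\{a\}$ is the rank-one positive semidefinite update $W_{S\cup\{a\}}=W_{S}+x_{a}x_{a}^{T}$, so both $f_{S}(a)$ and $g_{2,S}(a)$ are increments of the smallest and largest eigenvalue of a symmetric matrix under a PSD perturbation. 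I would first record the elementary bounds that follow from Weyl's inequalities and monotonicity, namely $0\le f_{S}(a)\le \|x_{a}\|^{2}$ and $0\le g_{2,S}(a)\le \|x_{a}\|^{2}$; these guarantee nonnegativity of the marginals and that the divergence is well defined wherever $g_{2,S}(a)>0$.

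The natural tool for the two ratio bounds is Rayleigh-quotient sandwiching of each increment. Evaluating the quadratic form of $W_{S}+x_{a}x_{a}^{T}$ at the bottom eigenvector $w$ of $W_{S}$ gives $f_{S}(a)\le (w^{T}x_{a})^{2}$, while evaluating it at the bottom eigenvector $w'$ of $W_{S\cup\{a\}}$ gives $f_{S}(a)\ge (w'^{T}x_{a})^{2}$; symmetrically, the top eigenvectors of $W_{S}$ and of $W_{S\cup\{a\}}$ sandwich $g_{2,S}(a)$ between two squared projections of $x_{a}$. The plan is then to convert these projection factors into spectral quantities of the single-element Gramians $W_{\omega}=\Lambda_{0}+x_{\omega}x_{\omega}^{T}$: since $\Lambda_{0}=\beta^{2}I$ gives the explicit values $\lambda_{n}(W_{\omega})=\beta^{2}+\|x_{\omega}\|^{2}$ and $\lambda_{1}(W_{\omega})=\beta^{2}$, the worst-case projection ratios over the ground set collapse to $\min_{\omega}\lambda_{1}(W_{\omega})/\lambda_{n}(W_{\omega})$ and $\max_{\omega}\lambda_{n}(W_{\omega})/\lambda_{1}(W_{\omega})$, which are exactly $\delta_{l}$ and $\delta_{u}$. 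The upper bound $f_{S}(a)\le\delta_{u}\,g_{2,S}(a)$ is the more tractable half, since the projection inequalities above pair a small numerator with a controlled denominator.

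In parallel I would have to justify that $g_{2}$ is submodular, which I would attempt through its variational form $g_{2}(S)=\max_{\|v\|=1}\big(v^{T}\Lambda_{0}v+\sum_{s\in S}(v^{T}x_{s})^{2}\big)$, exhibiting it as a pointwise maximum of monotone modular set functions and arguing that the optimal (top-eigenvector) direction makes the marginal $g_{2,S}(a)$ non-increasing in $S$.

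The main obstacle is the lower bound $f_{S}(a)\ge\delta_{l}\,g_{2,S}(a)$, and with it the submodularity step, because the two extremal increments decouple: $f_{S}(a)$ is governed by the projection of $x_{a}$ onto the minimal eigenspace of $W_{S}$, whereas $g_{2,S}(a)$ is governed by its projection onto the maximal eigenspace, and these two projections can be wildly different. When $x_{a}$ is (nearly) orthogonal to the minimal eigenspace---which is generic because $W_{S}\succeq\beta^{2}I$ typically has many eigenvalues pinned at the floor $\beta^{2}$---the minimum-eigenvalue increment is essentially zero while the maximum-eigenvalue increment stays of order $\|x_{a}\|^{2}$, so the ratio $f_{S}(a)/g_{2,S}(a)$ is difficult to keep above the positive constant $\delta_{l}$. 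Making the lower bound rigorous therefore seems to require either a spectral-gap/nondegeneracy hypothesis on the $W_{S}$ or a finer perturbation analysis (via eigenvalue interlacing and the secular equation for the rank-one update) that ties the minimal-eigenvalue motion back to the single-element eigenvalue ratios; I expect this to be the crux of the argument.
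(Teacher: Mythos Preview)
Your approach diverges from the paper's and, as you yourself flag, runs into a real obstacle on the lower bound. The paper does not go through Rayleigh quotients or eigenvector projections at all. Instead it applies Weyl's inequality with the \emph{single-element Gramian} $W_{\{a\}}$ as the perturbation (in the Gramian setting the paper has in mind, $W_{S\cup\{a\}}$ decomposes additively with $W_{\{a\}}$), obtaining the two-sided sandwich
\[
\lambda_{1}(W_{\{a\}})\;\le\;f_{S}(a)\;\le\;\lambda_{n}(W_{\{a\}}),
\qquad
\lambda_{1}(W_{\{a\}})\;\le\;g_{2,S}(a)\;\le\;\lambda_{n}(W_{\{a\}}),
\]
for every $S$ and $a$. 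Once both marginals are trapped between the \emph{same} pair of positive scalars depending only on $a$, the ratio bounds are immediate algebra: multiply and divide by $\lambda_{1}(W_{\{a\}})$ or $\lambda_{n}(W_{\{a\}})$ and take the worst case over $\omega\in\Omega$ to get $\delta_{l}$ and $\delta_{u}$. No eigenvector alignment, no secular equation, no spectral-gap hypothesis is needed.

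The gap in your plan is precisely the choice of perturbation in Weyl. You apply Weyl to the rank-one piece $x_{a}x_{a}^{T}$, which has $\lambda_{1}=0$ and therefore yields only the trivial lower bound $f_{S}(a)\ge 0$; this is what forces you into the projection analysis and the decoupling problem you describe. The paper instead treats the increment as the full positive-definite block $W_{\{a\}}$, whose smallest eigenvalue is strictly positive, so Weyl already delivers a uniform positive floor on both $f_{S}(a)$ and $g_{2,S}(a)$. Your worry that ``$f_{S}(a)/g_{2,S}(a)$ is difficult to keep above a positive constant'' evaporates under this framing, because the same $\lambda_{1}(W_{\{a\}})$ bounds the numerator from below and $\lambda_{n}(W_{\{a\}})$ bounds the denominator from above. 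The submodularity of $g_{2}$ is asserted rather than proved in the paper's argument, so your attempt via the variational form is extra work; note, however, that a pointwise maximum of modular functions is not automatically submodular, so that particular route would itself need more care.
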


The proofs of the Proposition\,\ref{prop:negTrace}, \ref{prop:minEig1} and \ref{prop:minEig2} are provided in the appendix. With some of these theoretical applications we will see their practical use in the following section.

\section{Experiments}

We apply some of the identified non-submodular functions in the previous section to real-world data and observe the performance of greedy selection. Specifically, first, we establish using simulations that the bounds computed using $\delta-$approximate submodularity concept widely improve the state-of-the-art bounds. Next, we apply the non-submodular function to the problem of finding sensors for the structured data using augmented linear model in the context of real-world electroencephalogram (EEG) dataset.

\subsection{Tightness analysis of the performance bounds}

\begin{figure}
\centering
\includegraphics*[viewport=0 0 600 460, width = 3in, height = 2.5in]{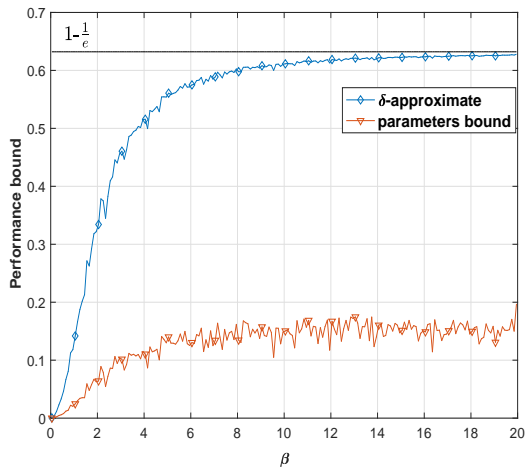}
\caption{Comparison of performance bound for non-submodular function $f(S) = -\text{tr}(W_{S}^{-1})$ using $\delta-$approximate method and using bound of parameters $(\alpha, \gamma)$ from \citep{bian}.}
\label{fig:perBnd}
\end{figure}

The performance bound presented in \citep{bian} requires a combinatorial exhaustive search for the computation of parameters, making it extremely difficult to realize in practical scenarios. To remedy this, the authors have presented the bounds for parameters (curvature and submodularity ratio). The performance bound for greedy algorithm presented in this work in Theorem\,\ref{thm:deltaApproxGreedy} requires only linear search to compute the curvature of the submodular function, $\alpha_{\delta}$ from (\ref{eqn:totalCurv}). We make a comparison between the presented bounds and the ones in  \citep{bian} through simulation of negative trace of $W_{S}$ inverse in (\ref{eqn:negTraceInv}). To simulate the matrix $W_{S}$, the entries of the data matrix ${\bf X}\in\mathbb{R}^{10\times 30}$ are generated from Gaussian distribution $\mathcal{N}(0,1)$ and each column is normalized, i.e. $\vert\vert{\bf x}_{i}\vert\vert_{2} = 1$. The matrix $\Lambda_{0}$ in $W_{S}$ is taken as $\beta^{2}I$. \figurename\,\ref{fig:perBnd} shows the performance bound for different values of $\beta$. It can be observed that there is a huge gap between the performance bounds using $\delta_{u}$ and $\delta_{l}$ from Proposition\,\ref{prop:negTrace}, Table\,\ref{tab:approxTab} and the one using bounds of parameters from \citep{bian}.

An interesting observation can be made when $\beta$ grows large which can be explained using the current theory of $\delta-$approximation. In the limit of large $\beta$, the ratio of bounds $\delta_{u}/\delta_{l}\rightarrow 1$, also  the identified submodular function, i.e., log determinant of Gramian becomes constant and, hence $\alpha_{\delta} \rightarrow 1$. Substituting these values in Table\,\ref{tab:approxTab}, the performance bound goes to $1-1/e$ which matches that of submodular function.

\subsection{Efficient sensor selection for accurate brain activity mining}

We now present some experiments for a real-world application setting collected by the BCI$2000$ system with a sampling rate of $160$Hz \citep{schalk}. Individually, we consider $64-$channel EEG data set which records the brain activity of $10$ subjects $(\text{S}001-\text{S}010)$ when they are performing motor and
imagery tasks. Each subject completed a set of $4$ tasks while interacting with a target appearing on the screen to imagine motor movements \citep{Goldbergere215}. Briefly, Task $1$ was open/closure of left or right fist as the target appears on the screen, Task $2$ was imagining the open/closure of left or right fist as the target appears. The Task $3$ was open/closure of both fist/feet and Task $4$ was imagining open/closure of both fist/feet as the target appears on the screen. The brain activity recorded through EEG sensors during this process is mathematically modeled using the augmented linear model as described in \citep{xue, gauravACC2018}.

\begin{figure}[!t]
\centering
\begin{subfigure}[b]{0.5\linewidth}
\centering
\includegraphics*[viewport=0 0 500 425, width = 3in, height = 2.8in]{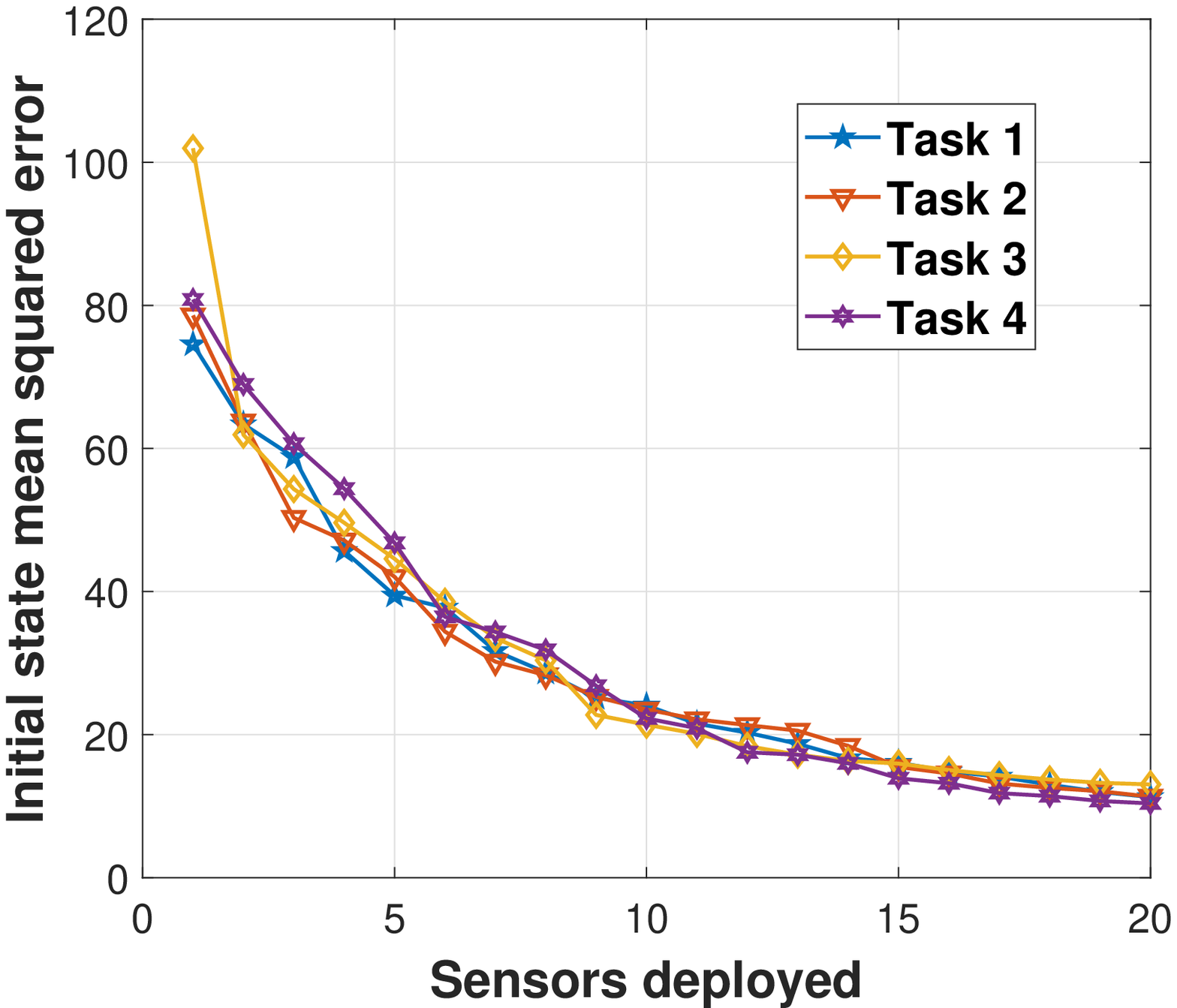}
\caption{}
\end{subfigure}\hfill
\begin{subfigure}[b]{0.5\linewidth}
\centering
\includegraphics*[viewport=0 0 510 425, width = 3in, height = 2.8in]{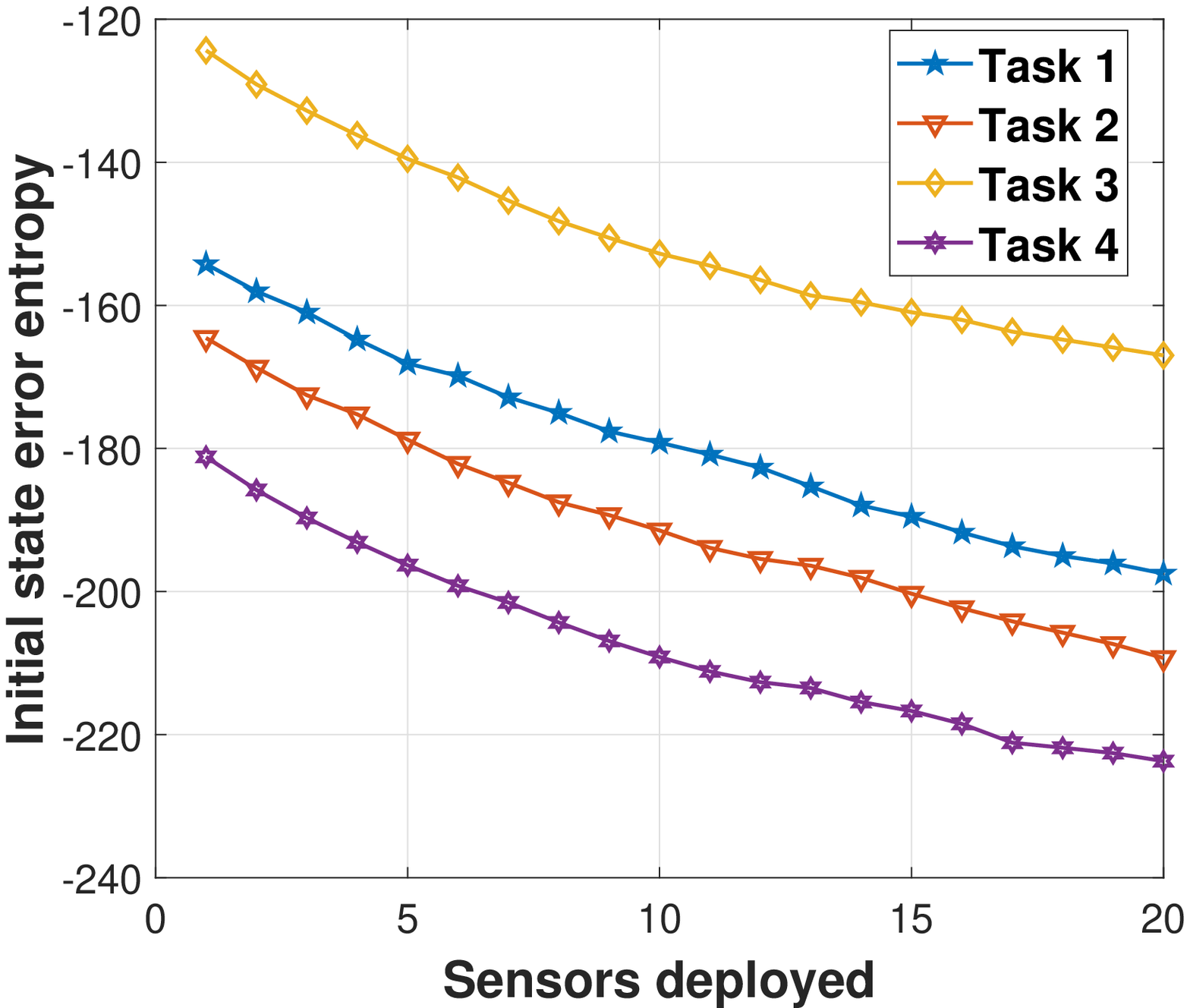}
\caption{}
\end{subfigure}

\vspace*{15pt}
\begin{subfigure}[b]{\linewidth}
\centering
\includegraphics*[viewport=0 0 510 450, width = 1.5in, height = 1.4in]{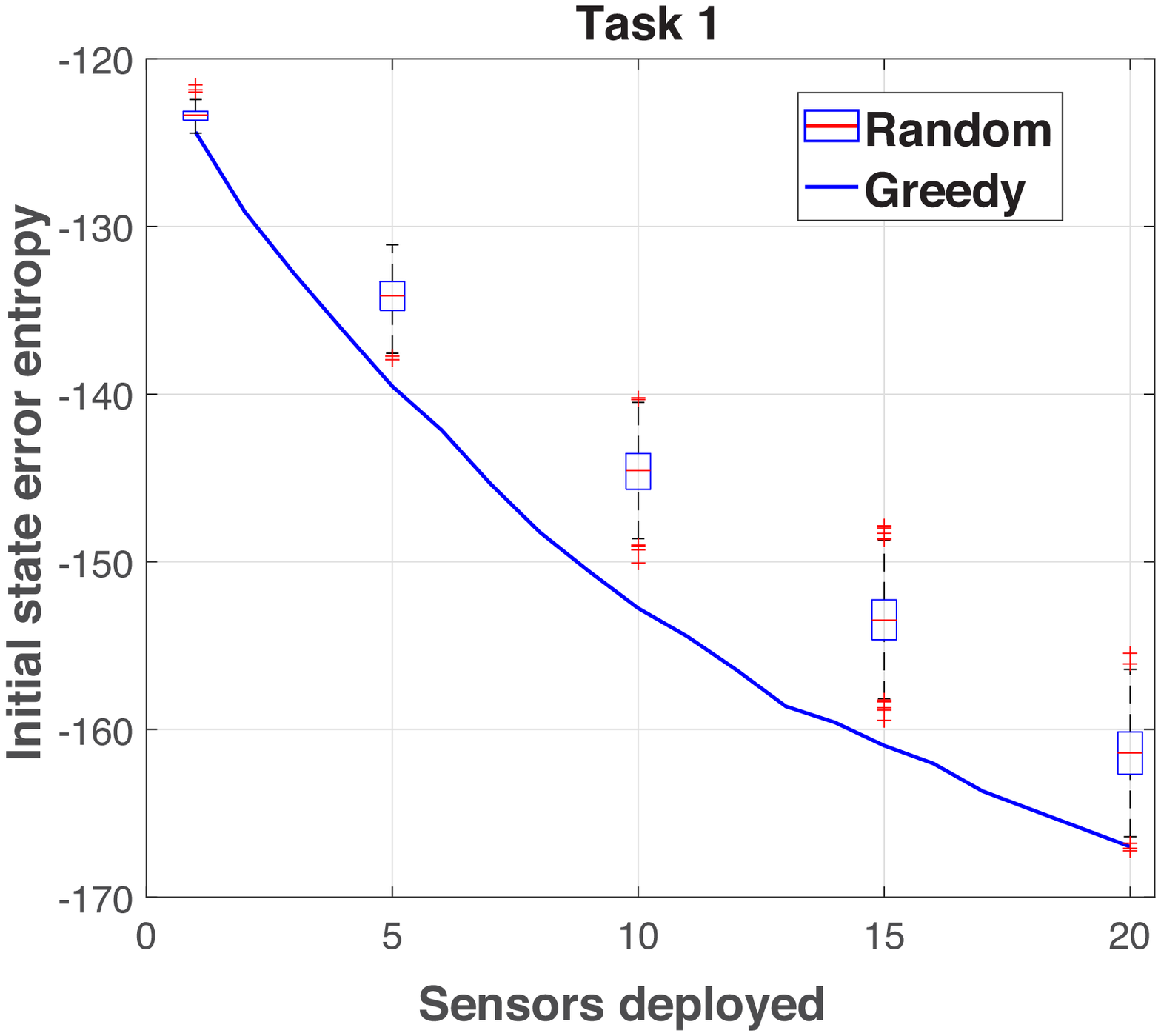}\hfill
\includegraphics*[viewport=0 0 510 450, width = 1.5in, height = 1.4in]{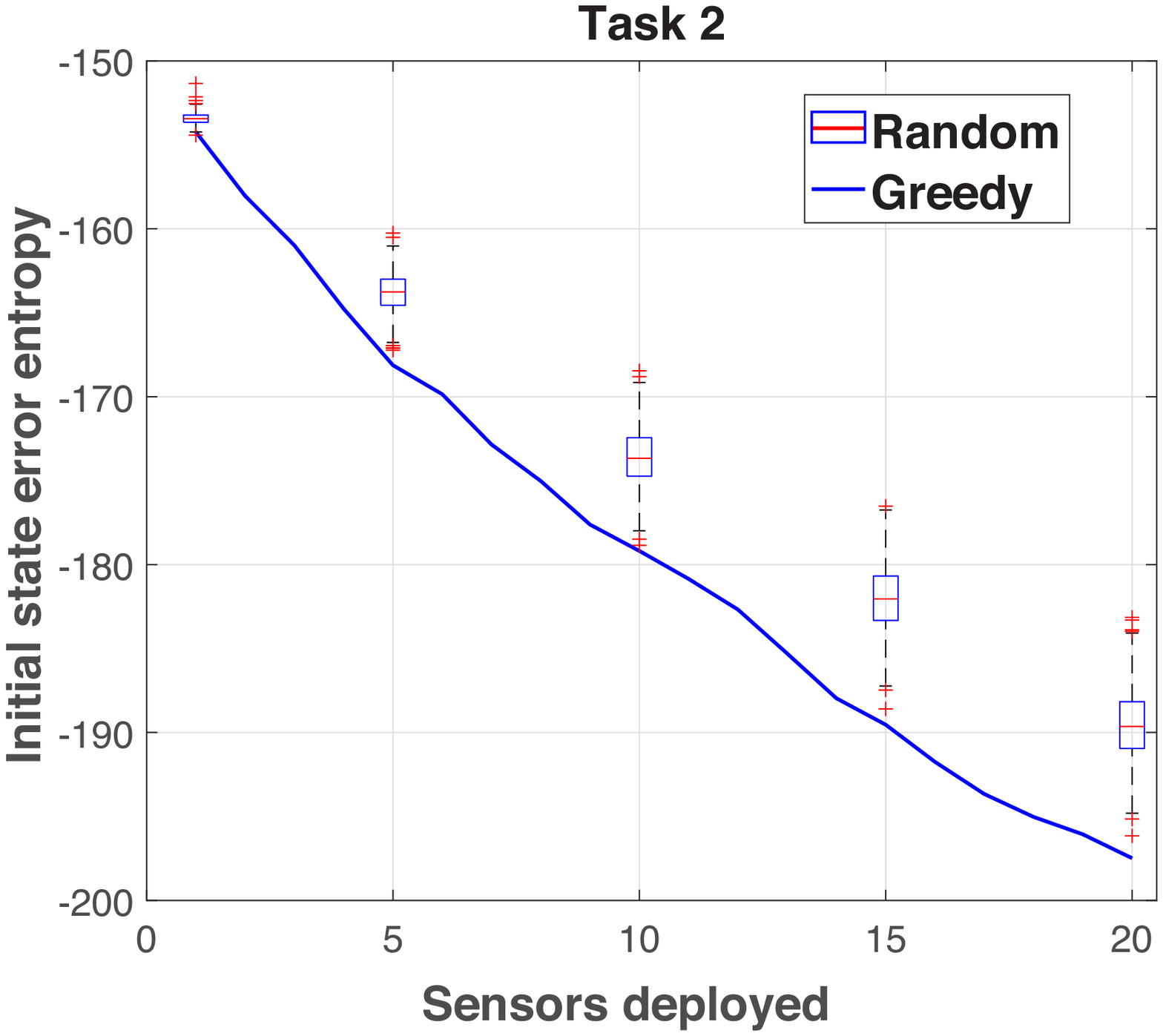}\hfill
\includegraphics*[viewport=0 0 510 450, width = 1.5in, height = 1.4in]{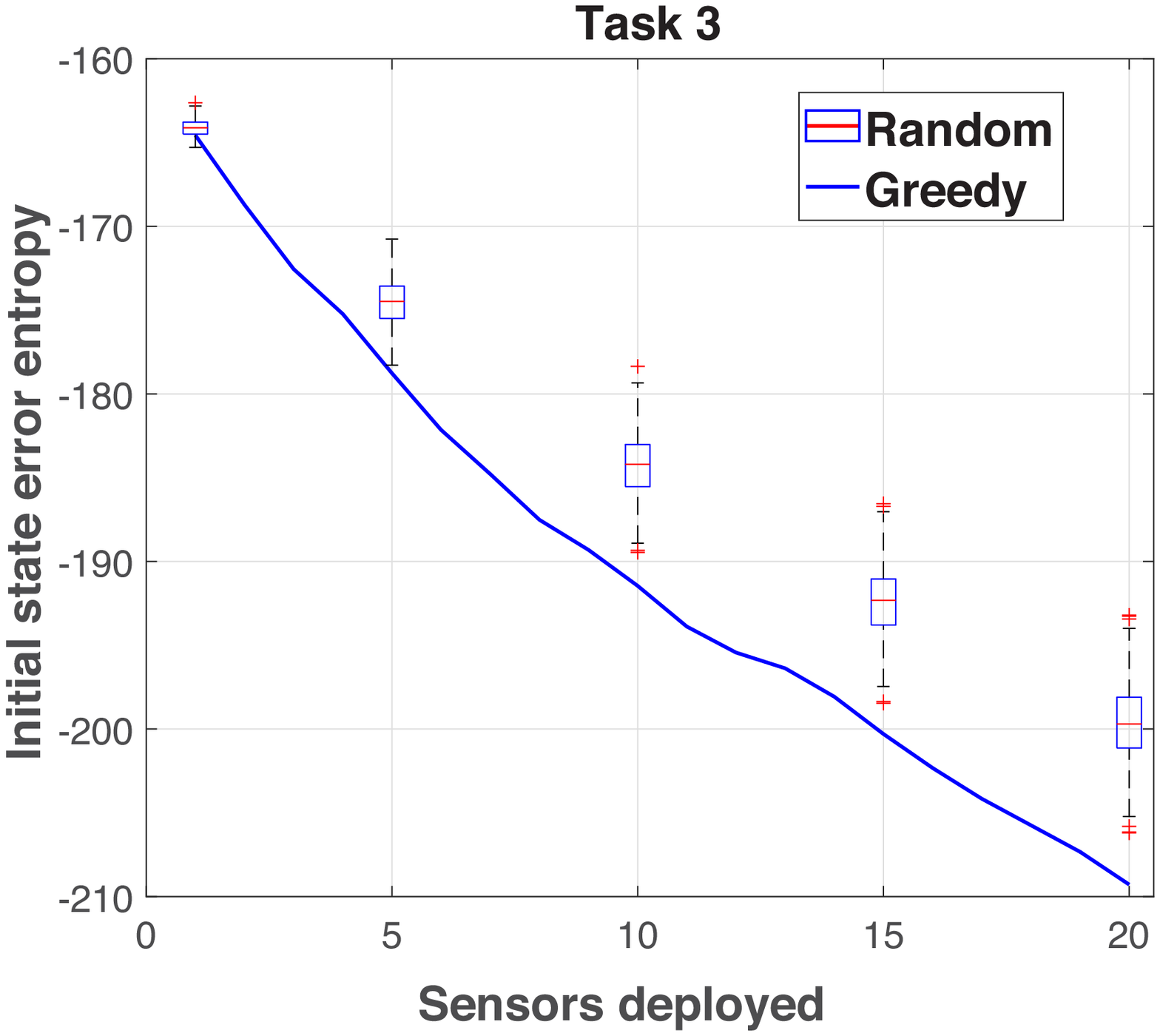}\hfill
\includegraphics*[viewport=0 0 510 450, width = 1.5in, height = 1.4in]{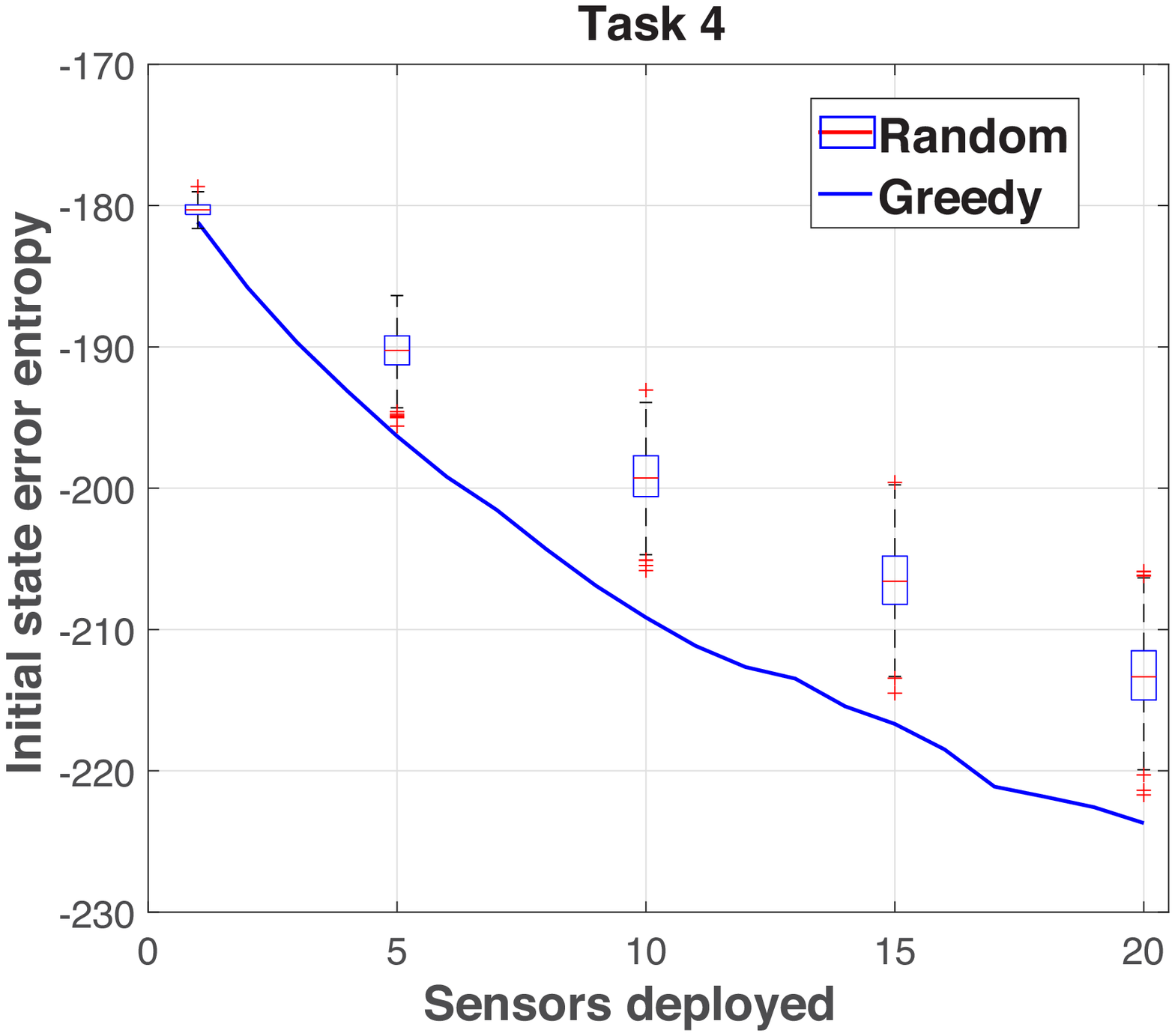}
\caption{}
\end{subfigure}
\caption{Greedy selection with maximization of the minimum eigenvalue, and performance representation with mean square and entropy of the initial state estimation error in (a) and (b) respectively for all $4$ tasks versus various sensor budgets. A comparison of the performance of Greedy and random selection of sensors for all $4$ tasks is presented in (c) with box plots for random selection.}
\label{fig:initStateErr}
\end{figure}

The problem can be stated as follows: selection of appropriate sensors subject to the sensors number constraint such that the estimation error of the initial state is minimized. We will resort to the identified $\delta-$approximate submodular function, the minimum eigenvalue of the observability Gramian (\ref{eqn:minEigGramian}) to select the sensors. The selection of sensors subject to the maximization of the minimum eigenvalue helps in improving the quality of estimation while inversion of the observability matrix is required. The estimation of the initial state is executed via minimum mean squared error criteria \citep{sergio,tzoumas2018TCNS}. The performance of the selection metric,  i.e., minimum eigenvalue is plotted in \figurename\,\ref{fig:initStateErr} for a total of $4$ tasks. For the evaluation purpose, two metrics are explored namely, mean squared error of the initial state estimation and entropy of the error incurred in the initial state estimation. We have omitted the additive constants in the plots of the entropy. It is observed that the minimum eigenvalue metric performs well with the greedy selection algorithm. We also compared in \figurename\,\ref{fig:initStateErr} the performance with respect to the entropy of initial state estimation error while executing greedy and random sensor selection. The performance of the random selection is presented in the form of box plots. We observe a gap between greedy and random selection in all of the $4$ tasks.

\section{Related Work}

In this section, we acknowledge works related to non-submodular functions and approximations. Also, we will see the distinction between our current work.

There has been work done in the domain of approximating set functions. Specifically for submodular functions, \citep{iyer2} discuss functions which are representable as difference of submodular functions, \citep{goemans,goel,svitkina} approximated submodular functions using value oracle model with polynomial queries. \citep{iyer} used the curvature of submodular functions to improve the bounds of \citep{goemans}. \citep{devanur} approximated the submodular functions concerning cut functions of directed/undirected graphs. While these works focus on approximating submodular functions, the closest ones related to approximating non-submodular functions are \citep{horel, hassidim}. The authors have considered deviations from submodularity in the context of noisy oracle model. Similarly, there are other works related to noisy versions \citep{krause3, krause4, kempe}. The fundamental model was the presence of a submodular function regarding value oracle and then its deviation from submodularity upon corruption from additive or multiplicative noise. But as we have seen that in many applications the objective function is provided in the closed form and therefore, the number of queries are not the primary concern. Since marginals of functions are more useful in the optimization algorithms, we have seen that defining approximation in terms of function marginals is more useful than regarding function value itself, for example, the definition of $\epsilon-$approximate submodular in \citep{horel,NIPS2017_6970}.

The other versions of non-submodularity like restricted and shifted are pointed out in \citep{Du}. The restricted submodularity arise when a function is not submodular in general but only when restricted to some subset, while shifted submodularity was defined as $f_{A\cup B}(X) - f_{A}(X) \leq 1$. \citep{borodin} introduced the notion of weak submodular functions which are monotonous and proved the performance bounds. 

For a general non-submodular function, \citep{das} have introduced submodularity ratio (and generalized submodularity ratio \citep{bian}) which is roughly speaking a measure of the deviation from submodularity. Using generalized submodularity ratio and generalized curvature, \citep{bian} derived performance bound for any non-submodular function. The computation of the parameters in the bound is a challenge as they require an exhaustive combinatorial search. Instead, the currently proposed framework of $\delta-$approximation computes the performance bound in linear complexity. Besides, it also helps in establishing the correspondence of non-submodular function to some submodular functions with closed-form bounds.

\section{Conclusion}

In this work, we have proposed that any set function can be modeled as a $\delta-$approximate submodular function. A new interpretation of closeness to submodularity is presented using region of submodularity (ROS) which is a function of the submodularity ratio. This methodology offers fundamental insight into the proximity of a function to submodularity, and it is shown that better performance guarantee bounds for the greedy algorithm can be achieved by carefully choosing a submodular function in the ROS with minimum total curvature as a function of $\delta$. The expression for performance bound is computable in just linear complexity as compared to the existing ones which require a combinatorial search.

Besides, the present results can be copulated with the existing literature on learning submodular functions to bound the approximation errors in the context of query-base setups, and open new venues of research for interpolation of closed-form set functions that often occur in applications such as sensor and/or feature selection. The future work would include sufficient condition for the existence of submodular functions and lower bounds on the minimum curvature in ROS as a function of $\delta$, such that the sub-optimality guarantees are improved.


\nocite{bach, das1, horel, hassidim, bian, iyer, iyer1, feige, xue}

\bibliography{delSMF}

\appendix
\onecolumn

The Appendix is organized as follows. In Appendix\,\ref{append:pfThm3}, we provide the detailed proof of Theorem\,\ref{thm:deltaApproxGreedy}. In Appendix\,\ref{append:pfThm2}, we present an alternate version of the Theorem\,\ref{thm:submodCurv} using the introduced definition of greedy curvature in (\ref{eqn:greedyCurv}). Finally, in the Appendix\,\ref{append:pfProp}, we provide proofs of the three Propositions stated in Section\,\ref{sec:Appl}.

\section{Proof of Theorem\,\ref{thm:deltaApproxGreedy}}
\label{append:pfThm3}

\begin{proof} Let $S_{G}$ be the output of the greedy algorithm and therefore $f(\Omega^{\ast}) \geq f(S_{G})$. At the $i$th stage of the algorithm,  if $S_{G}^{i}$ is the selected set then $f(\Omega^{\ast}\,\cup\,S_{G}^{i})$ can be expanded in two ways. First we have
\begin{eqnarray}
f(\Omega^{\ast}\cup S_{G}^{i}) &=& f(\Omega^{\ast}) + f_{\Omega^{\ast}}(S_{G}^{i}) = f(\Omega^{\ast}) + \sum\limits_{s_{j}\in S_{G}^{i}\setminus \Omega^{\ast}}f_{\Omega^{\ast}\cup S_{G}^{j-1}}(s_{j})\nonumber\\
&\geq& f(\Omega^{\ast}) + (1-\alpha_{G})\sum\limits_{s_{j}\in S_{G}^{i}\setminus \Omega^{\ast}}f_{S_{G}^{j-1}}(s_{j}),
\label{eqn:midEqn1}
\end{eqnarray}
\noindent where the inequality is written using the definition of greedy curvature from (\ref{eqn:greedyCurv}). On the other hand, the expansion is as follows
\begin{equation}
f(\Omega^{\ast}\cup S_{G}^{i}) = f(S_{G}^{i}) + f_{S_{G}^{i}}(\Omega^{\ast}).
\label{eqn:midEqn2}
\end{equation}
After combining (\ref{eqn:midEqn1}) and (\ref{eqn:midEqn2}), we can write that
\begin{eqnarray*}
f(\Omega^{\ast}) + (1- \alpha_{G})\sum\limits_{s_{j}\in S_{G}^{i}\setminus \Omega^{\ast}}f_{S_{G}^{j-1}}(s_{j}) &\leq& \sum\limits_{s_{j}\in S_{G}^{i}}f_{S_{G}^{j-1}}(s_{j}) + f_{S_{G}^{i}}(\Omega^{\ast}), \nonumber
\end{eqnarray*}
\noindent which can be re-written as 

\begin{eqnarray*}
f(\Omega^{\ast}) \leq \alpha_{G}\sum\limits_{s_{j}\in S_{G}^{i}}f_{S_{G}^{j-1}}(s_{j}) + (1-\alpha_{G})\sum\limits_{s_{j}\in S_{G}^{i}\cap\Omega^{\ast}}f_{S_{G}^{j-1}}(s_{j}) + f_{S_{G}^{i}}(\Omega^{\ast}).\nonumber\\
\end{eqnarray*}

At this point, it should be noted that out of $\text{ROS}(f,\delta)$, we have chosen $g$ which has total curvature of $\alpha_{\delta}$. We can upper bound the last term of the above inequality using (\ref{eqn:altDeltaApprox}) and use the diminishing returns property of submodular function $g$ to write that 
\begin{eqnarray*}
f(\Omega^{\ast})&\leq& \alpha_{G}f(S_{G}^{i}) + (1-\alpha_{G})\sum\limits_{s_{j}\in S_{G}^{i}\cap\Omega^{\ast}}f_{S_{G}^{j-1}}(s_{j}) + (1+\delta)\sum\limits_{\omega\in\Omega^{\ast}\setminus S_{G}^{i}}g_{S_{G}^{i}}(\omega)\nonumber\\
&\leq& \alpha_{G}f(S_{G}^{i}) + (1-\alpha_{G})\sum\limits_{s_{j}\in S_{G}^{i}\cap\Omega^{\ast}}f_{S_{G}^{j-1}}(s_{j}) + \frac{1+\delta}{1-\delta}\sum\limits_{\omega\in\Omega^{\ast}\setminus S_{G}^{i}}f_{S_{G}^{i}}(\omega).\nonumber\\
\end{eqnarray*}

The greedy algorithm at the $(i+1)$th step would select $s_{i+1}$ according to the following 
\begin{equation}
A(i+1) = \max\limits_{a \in \Omega\backslash S_{G}^{i}}f_{S_{G}^{i}}(a) = f_{S_{G}^{i}}(s_{i+1}),
\label{eqn:greedyNextStep}
\end{equation}
where $A(i+1)$ is the gain at the $(i+1)$th step. The last term can be upper bounded by $A(i+1)$ and let us denote the size of set $S_{G}^{i}\,\cap\,\Omega^{\ast}$ as $t_{i}$. Subsequently, we can write that

\begin{eqnarray*}
f(\Omega^{\ast})&\leq& \alpha_{G}f(S_{G}^{i}) + (1-\alpha_{G})\sum\limits_{s_{j}\in S_{G}^{i}\cap\Omega^{\ast}}f_{S_{G}^{j-1}}(s_{j}) + \frac{1+\delta}{1-\delta}(k-t_{i})f_{S_{G}^{i}}(s_{i+1})\nonumber\\
&\leq& \alpha_{G}f(S_{G}^{i}) + \sum\limits_{s_{j}\in S_{G}^{i}\cap\Omega^{\ast}}\left\{(1 - \alpha_{G})f_{S_{G}^{j-1}}(s_{j}) - f_{S_{G}^{i}}(s_{i+1})\right\} + k\frac{1+\delta}{1-\delta}f_{S_{G}^{i}}(s_{i+1}).\nonumber\\
\end{eqnarray*}

\noindent where we have used the fact that $\delta$ is feasible according to Lemma\,\ref{lemm:necessaryCond} and hence $\frac{1+\delta}{1-\delta}\geq\frac{1}{\gamma_{f}}\geq 1$. The summation term in the above inequality can be upper-bounded by $0$ using the definition of greedy curvature from (\ref{eqn:greedyCurv}) and we obtain
\begin{eqnarray}
f(\Omega^{\ast})&\leq& \alpha_{G}f(S_{G}^{i}) + k\frac{1+\delta}{1-\delta}f_{S_{G}^{i}}(s_{i+1}).
\label{eqn:finalForm1}
\end{eqnarray}

We shall now upperbound the greedy curvature in terms of $\alpha_{\delta}$. In that process, we can write that

\begin{eqnarray}
\min\limits_{a \in S_{G}\setminus (S_{G}^{i-1}\cup \Omega^{\ast})}\frac{f_{S_{G}^{i-1}\cup \Omega^{\ast}}(a)}{f_{S_{G}^{i-1}}(a)} &\geq& \frac{1-\delta}{1+\delta}\min\limits_{a \in S_{G}\setminus (S_{G}^{i-1}\cup \Omega^{\ast})}\frac{g_{S_{G}^{i-1}\cup \Omega^{\ast}}(a)}{g_{S_{G}^{i-1}}(a)} \nonumber\\
&\geq&\frac{1-\delta}{1+\delta}\min\limits_{a \in \Omega}\frac{g_{\Omega\setminus a}(a)}{g(a)} = \frac{1-\delta}{1+\delta}(1-\alpha_{\delta}),
\label{eqn:greedyCurvBnd1}
\end{eqnarray}
\noindent where the first inequality is written using (\ref{eqn:altDeltaApprox}) and second inequality is using the property of submodular functions. The last equality is using the definition of total curvature from (\ref{eqn:totalCurv}). Using the similar approach, we can again write that
\begin{eqnarray}
\min\limits_{\substack{a \in (S_{G}\cap\Omega^{\ast})\setminus S_{G}^{i-1} \\ i\leq j \leq k}}\frac{f_{S_{G}^{j-1}}(s_{j})}{f_{S_{G}^{i-1}}(a)} &\geq& \min\limits_{\substack{a \in (S_{G}\cap\Omega^{\ast})\setminus S_{G}^{i-1} \\ i\leq j \leq k}}\frac{f_{S_{G}^{j-1}}(a)}{f_{S_{G}^{i-1}}(a)}\nonumber\\
&\geq&\frac{1-\delta}{1+\delta}\min\limits_{\substack{a \in (S_{G}\cap\Omega^{\ast})\setminus S_{G}^{i-1} \\ i\leq j \leq k}}\frac{g_{S_{G}^{j-1}}(a)}{g_{S_{G}^{i-1}}(a)} \nonumber\\
&\geq& \frac{1-\delta}{1+\delta}\min\limits_{\substack{a \in (S_{G}\cap\Omega^{\ast})\setminus S_{G}^{i-1} \\ i\leq j \leq k}}\frac{g_{\Omega\setminus a}(a)}{g(a)}
\nonumber\\
&\geq& \frac{1-\delta}{1+\delta}\min\limits_{a\in\Omega}\frac{g_{\Omega\setminus a}(a)}{g(a)} = \frac{1-\delta}{1+\delta}(1-\alpha_{\delta}).
\label{eqn:greedyCurvBnd2}
\end{eqnarray}
Using the equation (\ref{eqn:greedyCurvBnd1}), (\ref{eqn:greedyCurvBnd2}) and (\ref{eqn:greedyCurv}), we can now conclude that $\alpha_{G}\leq \frac{2\delta}{1+\delta} + \frac{1-\delta}{1+\delta}\alpha_{\delta}$. Therefore, equation (\ref{eqn:finalForm1}) can now be written as

\begin{eqnarray}
f(\Omega^{\ast})&\leq& \left(\frac{2\delta}{1+\delta} + \frac{1-\delta}{1+\delta}\alpha_{\delta} \right)f(S_{G}^{i}) + k\frac{1+\delta}{1-\delta}f_{S_{G}^{i}}(s_{i+1}).
\label{eqn:finalForm}
\end{eqnarray}

The equation (\ref{eqn:finalForm}) is of the form of $\lambda u_{i+1} \geq c - \sum\nolimits_{j=1}^{i}u_{j}$, which has the solution of the form $\sum\nolimits_{i=1}^{k}u_{i} \geq c(1 - (1 - \frac{1}{\lambda})^{k})$ using simple mathematical induction. Therefore, we can write
\begin{eqnarray}
f(S_{G}) &=& \sum\limits_{i=1}^{k}A(i)\nonumber\\
&\geq& \frac{1}{\frac{2\delta}{1+\delta} + \frac{1-\delta}{1+\delta}\alpha_{\delta}}\left(1 - \left(1 - \frac{1}{k}\left(\frac{2\delta}{1+\delta} + \frac{1-\delta}{1+\delta}\alpha_{\delta}\right)\left(\frac{1-\delta}{1+\delta}\right)\right)^{k}\right)OPT\nonumber\\
&\geq& \frac{1}{\frac{2\delta}{1+\delta} + \frac{1-\delta}{1+\delta}\alpha_{\delta}}\left(1 - e^{-\left(\frac{2\delta}{1+\delta} + \frac{1-\delta}{1+\delta}\alpha_{\delta}\right)\frac{1-\delta}{1+\delta}}\right)OPT.
\label{eqn:finalPerfBnd}
\end{eqnarray}
\end{proof}

\section{Alternate proof of Theorem\,\ref{thm:submodCurv}}
\label{append:pfThm2}

\begin{proof}

The reader can benefit from this alternate proof using the new definition of greedy curvature in (\ref{eqn:greedyCurv}). It is shown that the proof can be done without going into the direction of constructing LPs. Now, following the same steps and notations of the proof of Theorem \ref{thm:curvature}, we can write at the $i$th step that

\begin{eqnarray*}
f(\Omega^{\ast}) &\leq& (\alpha_{G}-1)\sum\limits_{s_{j}\in S_{G}^{i}\setminus \Omega^{\ast}}f_{S_{G}^{j-1}}(s_{j}) + \sum\limits_{s_{j}\in S_{G}^{i}}f_{S_{G}^{j-1}}(s_{j}) + f_{S_{G}^{i}}(\Omega^{\ast}) \nonumber\\
&=& \alpha_{G}\sum\limits_{s_{j}\in S_{G}^{i}\setminus \Omega^{\ast}}f_{S_{G}^{j-1}}(s_{j}) + \sum\limits_{s_{j}\in S_{G}^{i}\cap \Omega^{\ast}}f_{S_{G}^{j-1}}(s_{j}) + f_{S_{G}^{i}}(\Omega^{\ast}).\nonumber\\
\end{eqnarray*}
Using the definition of submodularity ratio from (\ref{eqn:genSubMod}) we can upperbound the last term to write that
\begin{eqnarray*}
f(\Omega^{\ast}) &\leq& \alpha_{G}\sum\limits_{s_{j}\in S_{G}^{i}\setminus \Omega^{\ast}}f_{S_{G}^{j-1}}(s_{j}) + \sum\limits_{s_{j}\in S_{G}^{i}\cap \Omega^{\ast}}f_{S_{G}^{j-1}}(s_{j}) + \frac{1}{\gamma}\sum\limits_{\omega \in \Omega^{\ast}\setminus S_{G}^{i}}f_{S_{G}^{i}}(\omega).\nonumber\\
\end{eqnarray*}
Using the property of greedy algorithm from (\ref{eqn:greedyCurv}), the last term would be upper-bounded by $A(i+1) = f_{S_{G}^{i}}(s_{i+1})$ and we can denote the size of the set $S_{G}^{i}\,\cap\,\Omega^{\ast}$ as $t_{i}$ to further write
\begin{eqnarray*}
f(\Omega^{\ast})&\leq& \alpha_{G}f(S_{G}^{i}) + \sum\limits_{s_{j}\in S_{G}^{i}\cap \Omega^{\ast}}(1-\alpha_{G})f_{S_{G}^{j-1}}(s_{j}) + \frac{k-t_{i}}{\gamma} f_{S_{G}^{i}}(s_{i+1})\nonumber\\
&=& \alpha_{G}f(S_{G}^{i}) + \sum\limits_{s_{j}\in S_{G}^{i}\cap \Omega^{\ast}}\left\{(1-\alpha_{G})f_{S_{G}^{j-1}}(s_{j}) - \frac{1}{\gamma}f_{S_{G}^{i}}(s_{i+1})\right\} + \frac{k}{\gamma} f_{S_{G}^{i}}(s_{i+1})\nonumber\\
&\leq& \alpha_{G}f(S_{G}^{i}) + \sum\limits_{s_{j}\in S_{G}^{i}\cap \Omega^{\ast}}\left\{(1-\alpha_{G})f_{S_{G}^{j-1}}(s_{j}) - f_{S_{G}^{i}}(s_{i+1})\right\} + \frac{k}{\gamma} f_{S_{G}^{i}}(s_{i+1}).\nonumber\\
\end{eqnarray*}
where the last inequality is written using the fact that $\gamma \leq 1$. Also, using the definition of greedy curvature from (\ref{eqn:greedyCurv}) the summation term in the above inequality would be upper-bounded by $0$. Therefore, we can say that
\begin{eqnarray}
f(\Omega^{\ast})&\leq& \alpha_{G}f(S_{G}^{i}) + \frac{k}{\gamma} f_{S_{G}^{i}}(s_{i+1})\nonumber\\
&\leq& \alpha f(S_{G}^{i}) + \frac{k}{\gamma} f_{S_{G}^{i}}(s_{i+1})= \alpha\sum\limits_{j=1}^{i}A(j) + \frac{k}{\gamma}\,A(i+1).
\label{eqn:curvSubModProofIter}
\end{eqnarray}
where we have used the fact that $\alpha_{G}\leq \alpha$. Using mathematical induction for (\ref{eqn:curvSubModProofIter}), we can write that
\begin{eqnarray*}
f(S_{G}) &=& \sum\limits_{i=1}^{k}A(i) \geq \frac{1}{\alpha}\left(1 - \left(1 - \frac{\alpha\gamma}{k}\right)^{k}\right)OPT \\
&\geq& \frac{1}{\alpha}\left(1 - e^{-\alpha\gamma}\right)OPT.
\end{eqnarray*} 

\end{proof}

\section{Proofs of the Propositions in Section\,\ref{sec:Appl}}
\label{append:pfProp}

The proofs of the Propositions stating $\delta-$approximation of non-submodular functions are described in this Section. In each proof, with $f$ denoting non-submodular function and $g$ being submodular, we will establish the inequality of the form of
\begin{equation*}
g_{S}(a)\delta_{l}\leq f_{S}(a)\leq \delta_{u}g_{S}(a),
\end{equation*}
to get the expressions for $\delta_{l}$ and $\delta_{u}$.

\subsection{Proposition\,\ref{prop:negTrace}}
\begin{proof}
For the matrix of the form of $W_{S} = \Lambda_{0} + \sum\nolimits_{s\in S}{\bf x}_{s}{\bf x}_{s}^{T}$ or any Gramian, with ordered eigenvalues $\lambda_{n}\geq\hdots\geq\lambda_{2}\geq\lambda_{1}$, we consider a submodular function $g(S) = \text{log}\,\text{det}(W_{S})$. The marginal of $g$ can be written as
\begin{eqnarray}
g_{S}(a) &=& \text{log}\,\text{det}(W_{S\cup\{a\}}) - \text{log}\,\text{det}(W_{S})\nonumber\\
&=& \sum\limits_{i=1}^{n}\text{log}\left(\frac{\lambda_{i}(W_{S\cup\{a\}})}{\lambda_{i}(W_{S})}\right).
\label{eqn:logDetMarg}
\end{eqnarray}

The marginal of the considered non-submodular function, $f(S) = -\text{tr}(W_{S}^{-1})$ can be written as
\begin{eqnarray}
f_{S}(a) &=& -\text{tr}(W_{S\cup\{a\}}^{-1}) + \text{tr}(W_{S}^{-1})\nonumber\\
&=& \sum\limits_{i=1}^{n}-\frac{1}{\lambda_{i}(W_{S\cup\{a\}})} + \frac{1}{\lambda_{i}(W_{S})} = \sum\limits_{i=1}^{n}\frac{\lambda_{i}(W_{S\cup\{a\}}) - \lambda_{i}(W_{S})}{\lambda_{i}(W_{S\cup\{a\}})\lambda_{i}(W_{S})}.
\label{eqn:margnegTr}
\end{eqnarray}

The expression in (\ref{eqn:margnegTr}) can be upper bounded using the relation $1-\frac{1}{x}\leq\text{log}(x), x>0$ as
\begin{eqnarray}
f_{S}(a) \leq \sum\limits_{i=1}^{n}\frac{1}{\lambda_{i}(W_{S})}\text{log}\left(\frac{\lambda_{i}(W_{S\cup\{a\}})}{\lambda_{i}(W_{S})}\right)&\leq&\frac{1}{\lambda_{1}(W_{\phi})}\sum\limits_{i=1}^{n}\text{log}\left(\frac{\lambda_{i}(W_{S\cup\{a\}})}{\lambda_{i}(W_{S})}\right)\nonumber\\
&=&\frac{1}{\lambda_{1}(W_{\phi})}g_{S}(a),
\end{eqnarray}
\noindent where we have used (\ref{eqn:logDetMarg}) in the last equality. The marginal of $f$ can be lower bounded using the relation $x-1 \geq \text{log}(x), x>0$ as follows.
\begin{eqnarray*}
f_{S}(a) \geq \sum\limits_{i=1}^{n}\frac{1}{\lambda_{i}(W_{S\cup\{a\}})}\text{log}\left(\frac{\lambda_{i}(W_{S\cup\{a\}})}{\lambda_{i}(W_{S})}\right)&\geq&\frac{1}{\lambda_{n}(W_{\Omega})}\sum\limits_{i=1}^{n}\text{log}\left(\frac{\lambda_{i}(W_{S\cup\{a\}})}{\lambda_{i}(W_{S})}\right)\\
&=&\frac{1}{\lambda_{n}(W_{\Omega})}g_{S}(a).
\end{eqnarray*}
\end{proof}

\subsection{Proposition\,\ref{prop:minEig1}}

\begin{proof}

For the matrix of the form of $W_{S} = \Lambda_{0} + \sum\nolimits_{s\in S}x_{s}x_{s}^{T}$ or any Gramian, we denote the ordered eigenvalues of $X_{S}$ as $\lambda_{n}\geq\hdots\geq\lambda_{2}\geq\lambda_{1}$. Therefore, the Weyl's inequality for matrices can be written as $\lambda_{i}(W_{S}) + \lambda_{1}(W_{\{a\}})\leq\lambda_{i}(W_{S\cup\{a\}})\leq\lambda_{i}(W_{S}) + \lambda_{n}(W_{\{a\}})$. Also, we have taken the case of $\Lambda_{0} = \beta^{2}I_{N}$. We consider the submodular (or modular in this case) function as $g_{1}(S) = \text{tr}(W_{S})$. The marginal of $g_{1}$ can be written as $g_{1\,S}(a) = \text{tr}(W_{\{a\}})$. The marginal of the considered non-submodular function, $f(S) = \lambda_{1}(W_{S})$ can be upper bounded as
\begin{eqnarray*}
f_{S}(a) = \lambda_{1}(W_{S\cup\{a\}}) - \lambda_{1}(W_{S}) &\leq& \lambda_{n}(W_{\{a\}})\\
&=&\text{tr}(W_{\{a\}})\left(1 - \frac{\sum\limits_{i=1}^{n-1}\lambda_{i}(W_{\{a\}})}{\sum\limits_{i=1}^{n}\lambda_{i}(W_{\{a\}})}\right)\\
&\leq&\text{tr}(W_{\{a\}})\left(1 - \frac{n-1}{n}\frac{\lambda_{1}(W_{\{a\}})}{\lambda_{n}(W_{\{a\}})}\right)\\
&\leq&\text{tr}(W_{\{a\}})\left(1 - \frac{n-1}{n}\min\limits_{\omega\in\Omega}\frac{\lambda_{1}(W_{\{\omega\}})}{\lambda_{n}(W_{\{\omega\}})}\right)\\
&=&g_{1\,S}(a)\left(1 - \frac{n-1}{n}\min\limits_{\omega\in\Omega}\frac{\lambda_{1}(W_{\{\omega\}})}{\lambda_{n}(W_{\{\omega\}})}\right),
\end{eqnarray*}
\noindent where in the first inequality we have used the Weyl's inequality. The marginal of $f$ can again written to be lower bounded as
\begin{eqnarray*}
f_{S}(a) = \lambda_{1}(W_{S\cup\{a\}}) - \lambda_{1}(W_{S}) &\geq& \lambda_{1}(W_{\{a\}})\\
&\geq& \frac{\lambda_{1}(W_{\{a\}})}{n}\left(1 + \frac{\lambda_{n-1}(W_{\{a\}})}{\lambda_{n}(W_{\{a\}})}+\hdots+\frac{\lambda_{1}(W_{\{a\}})}{\lambda_{n}(W_{\{a\}})}\right)\\
&=&\frac{\lambda_{1}(W_{\{a\}})\text{tr}(W_{\{a\}})}{n\lambda_{n}(W_{\{a\}})}\\
&\geq&\left(\frac{1}{n}\min\limits_{\omega\in\Omega}\frac{\lambda_{1}(W_{\{\omega\}})}{\lambda_{n}(W_{\{\omega\}})} \right)\text{tr}(W_{\{a\}})\\
&=&\left(\frac{1}{n}\min\limits_{\omega\in\Omega}\frac{\lambda_{1}(W_{\{\omega\}})}{\lambda_{n}(W_{\{\omega\}})} \right)g_{1\,S}(a).
\end{eqnarray*}

\end{proof}

\subsection{Proposition\,\ref{prop:minEig2}}

\begin{proof}
Similar to the Proof of the Proposition\,\ref{prop:minEig1}, we consider the matrix of the form of $W_{S} = \Lambda_{0} + \sum\nolimits_{s\in S}x_{s}x_{s}^{T}$ or any Gramian, and we denote the ordered eigenvalues of $X_{S}$ as $\lambda_{n}\geq\hdots\geq\lambda_{2}\geq\lambda_{1}$. The Weyl's inequality for matrices can be written as $\lambda_{i}(W_{S}) + \lambda_{1}(W_{\{a\}})\leq\lambda_{i}(W_{S\cup\{a\}})\leq\lambda_{i}(W_{S}) + \lambda_{n}(W_{\{a\}})$. We consider the submodular function as $g_{2}(S) = \lambda_{n}(W_{S})$. The considered non-submodular function, $f(S) = \lambda_{1}(W_{S})$ has the marginals which can be bounded using the Weyl's inequality as follows:
\begin{equation}
\lambda_{1}(W_{\{a\}})\leq f_{S}(a) = \lambda_{1}(W_{S\cup\{a\}}) - \lambda_{1}(W_{S}) \leq \lambda_{n}(W_{\{a\}}).
\label{eqn:minEigIneq}
\end{equation}
The marginal of the submodular function $g_{2}$ can be lower bounded as follows.
\begin{eqnarray*}
g_{2\,S}(a)=\lambda_{n}(W_{S\cup\{a\}}) - \lambda_{n}(W_{S})&\geq&\lambda_{1}(W_{\{a\}})\\
&=&\frac{\lambda_{1}(W_{\{a\}})}{\lambda_{n}(W_{\{a\}})}\lambda_{n}(W_{\{a\}})\\
&\geq&\left(\min\limits_{\omega\in\Omega}\frac{\lambda_{1}(W_{\{\omega\}})}{\lambda_{n}(W_{\{\omega\}})}\right)\lambda_{n}(W_{\{a\}})\\
&\geq&\left(\min\limits_{\omega\in\Omega}\frac{\lambda_{1}(W_{\{\omega\}})}{\lambda_{n}(W_{\{\omega\}})}\right)f_{S}(a).
\end{eqnarray*}
\noindent where in the last inequality we have used (\ref{eqn:minEigIneq}). Similarly, we can upperbound the marginal of $g_{2}$ as follows.
\begin{eqnarray*}
g_{2\,S}(a)  = \lambda_{n}(W_{S\cup\{a\}}) - \lambda_{n}(W_{S})&\leq&\lambda_{n}(W_{\{a\}})\\
&=& \frac{\lambda_{n}(W_{\{a\}})}{\lambda_{1}(W_{\{a\}})}\lambda_{1}(W_{\{a\}})\\
&\leq&\left(\max\limits_{\omega\in\Omega}\frac{\lambda_{n}(W_{\{\omega\}})}{\lambda_{1}(W_{\{\omega\}})}\right)\lambda_{1}(W_{\{a\}})\\
&\leq&\left(\max\limits_{\omega\in\Omega}\frac{\lambda_{n}(W_{\{\omega\}})}{\lambda_{1}(W_{\{\omega\}})}\right)f_{S}(a).
\end{eqnarray*}
Therefore, we can bound the marginal of $f$ on the both sides as follows:
\begin{equation*}
\frac{1}{\max\limits_{\omega\in\Omega}\frac{\lambda_{n}(W_{\{\omega\}})}{\lambda_{1}(W_{\{\omega\}})}}g_{2\,S}(a)\leq f_{S}(a) \leq \frac{1}{\min\limits_{\omega\in\Omega}\frac{\lambda_{1}(W_{\{\omega\}})}{\lambda_{n}(W_{\{\omega\}})}}g_{2\,S}(a),
\end{equation*}
or,
\begin{equation*}
{\min\limits_{\omega\in\Omega}\frac{\lambda_{1}(W_{\{\omega\}})}{\lambda_{n}(W_{\{\omega\}})}}g_{2\,S}(a)\leq f_{S}(a) \leq {\max\limits_{\omega\in\Omega}\frac{\lambda_{n}(W_{\{\omega\}})}{\lambda_{1}(W_{\{\omega\}})}}g_{2\,S}(a).
\end{equation*}

\end{proof}

\end{document}